\newtheorem{theorem}{Theorem}[section]
\newtheorem{proposition}[theorem]{Proposition}
\newtheorem*{assumption}{Assumption}
\theoremstyle{definition}
\newtheorem{remark}[theorem]{Remark}
\numberwithin{equation}{section}
\DeclareMathOperator*{\slim}{s-lim}
\DeclareMathOperator{\Tr}{Tr}
\DeclareMathOperator{\Ran}{Ran}
\begin{document}

\begin{center}

{\Large Memory effects in non-interacting mesoscopic transport}

\vspace{0.5cm}

Horia D. Cornean\footnote{\small Department of Mathematical Sciences,
     Aalborg University, Fredrik Bajers Vej 7G, DK-9220 \O{} Aalborg, Denmark. 
     E-mail: \texttt{cornean@math.aau.dk}}, Arne Jensen\footnote{\small Department of Mathematical Sciences,
     Aalborg University, Fredrik Bajers Vej 7G,
     DK-9220 \O{} Aalborg, Denmark. 
     E-mail: \texttt{matarne@math.aau.dk}}
     and Gheorghe Nenciu\footnote{\small Institute of Mathematics  of the Romanian Academy,
 Research Unit 1, P.O. Box 1-764,
 RO-014700 Bucharest, Romania. 
 E-mail: \texttt{Gheorghe.Nenciu@imar.ro}}

\end{center}

\begin{abstract}
Consider a quantum dot coupled to two semi-infinite one-dimensional
leads at thermal equilibrium. We turn on adiabatically a bias between the leads such that  
there exists exactly one discrete eigenvalue both at the beginning and
at the end of the switching procedure. It is shown that 
the expectation on the final bound state strongly depends on the history of the switching procedure. 
On the contrary, the contribution to the final steady-state
corresponding to the continuous spectrum has no memory, and only depends on the initial and final 
values of the bias. 
\end{abstract}

\section{Introduction and the main results}\label{intro}
Memory effects in quantum transport are rather common in systems where the carriers have self-interactions, and they can be of several types. Both in the 
Master Equation approach \cite{KF1,KF2} and the TDFT approach \cite{PSC} one observes a dependence of the steady state current both on the initial state and the switching procedure. Very 
recently \cite{CM} such a dependence on the initial state of the sample was proved in the cotunneling regime. 

In contrast, in the non-interacting mesoscopic quantum transport one can show that the charge/energy {\it current} observables depend neither on the initial state of the sample, nor on the switching procedure.

The sudden switch of a coupling at $t=0$ (in an initially partitioned system) has been thoroughly investigated in a number of previous works, see for 
example \cite{AJPP,JEP,Jaksic4,Jaksic5,Nenciu}; for a more physical approach see \cite{Caroli}. The steady state currents 
(computed as Ces{\` a}ro means) are very robust,  irrespective of which method one uses in order to induce a non-equilibrium state in the system (see \cite{CNZ}): their values only 
depend on the initial equilibrium state 
and on the final expression of the Hamiltonian which governs the evolution after the switching procedure is over.

In the partition-free approach introduced by Cini \cite{Cini}, 
the situation is rather similar. In \cite{CDP} it is treated a situation in which a thermal equilibrium state  
is perturbed by turning on adiabatically a bias between the leads. 
It is shown that if the instantaneous discrete spectrum of the one-body Hamiltonian is always well separated from the instantaneous continuous spectrum, 
then the adiabatic limit of the current coincides with the steady state current value of the sudden switch (\cite{Stefa1,Stefa2,CGZ}). 

In the current paper we investigate the situation in which the instantaneous discrete spectrum can enter the continuous one. It turns out that 
at the adiabatic limit, the steady state value of the charge current still has no memory of the switching procedure. 
But the situation is totally different for the  expectation on the final bound states. The adiabatic limit of this expectation is 
highly dependent on whether the instantaneous discrete spectrum enters the continuous spectrum or not. From a mathematical point of view, 
this phenomenon is related with the so-called 'adiabatic pair creation', see \cite{Nenciu2,Nenciu3, PD}. 

  Our model is of Wigner-Weisskopf type, 
see \cite{JEP} for a rather complete spectral analysis. We choose this model because we want to maximize clarity and minimize the technicality of the proofs, 
but most of the results below can be generalized to samples containing more sites, or even to a continuous setting.

\subsection{The setting and notation}

We consider two semi-infinite discrete leads coupled to a small system consisting of just one site. 

The single-particle Hilbert space is 
${\mathcal H}=\mathbf{C}\oplus \{l^2(\mathbf{N}_-)\oplus l^2(\mathbf{N}_+)\}=:{\mathcal H}_S\oplus{\mathcal H}_L$. The canonical  basis in 
${\mathcal H}_L$ is denoted by $\{|i_{\gamma}\rangle:\; \gamma=\pm, \;i\geq 0\}$ where 
$i_{\gamma}$ is the $i$-th site of the lead $\gamma$. Similarly, we denote by $\{|S\rangle \}$ 
the basis element of $\mathbf{C}$. With these notations we introduce the single-particle Hamiltonians $h_\pm$ which describe an 
electron on the leads to be just two copies of 
the usual one-dimensional discrete Laplacean $L$ initially defined on $l^2(\mathbf{Z})$ and then 
restricted to $l^2(\mathbf{N})$ with Dirichlet boundary condition at $-1$. 
With a physicist's notation we have 
$$L=\sum_{j \in \mathbf{Z}} \{
|j+1\rangle\langle j| +|j-1\rangle\langle j| \}.$$
If $\Pi_\pm=\sum_{j \geq 0}|j_\pm\rangle\langle j_\pm|$ are the projections on the left/right leads, then by definition 
$h_\pm:=\Pi_\pm L\Pi_\pm$.  We introduce the operators:
\begin{align}
\label{hL}
h_L&:=\sum_{\gamma =\pm} h_{\gamma},\quad h_S:=E_0|S\rangle\langle S|, \quad 
h_T=\tau\sum_{\gamma=\pm} \{|0_{\gamma}\rangle\langle S|+|S\rangle\langle 0_{\gamma}|\},
\end{align}
where $E_0\geq 0$ and $0<|\tau|\leq 1$ are real parameters to be chosen later.  If $v\geq 0$ is another real parameter 
which models the potential bias between the two leads, then the total Hamiltonian reads as: 
\begin{equation}\label{H_p}
h(v):=h_S+h_L+v\Pi_- +h_T.
\end{equation} 
We write
\begin{equation}
h_0(v)=h_S+h_L+v\Pi_- ,
\end{equation}
such that $h(v)=h_0(v)+h_T$.

The spectrum of $h_\pm$ is absolutely continuous and equals $[-2,2]$. The continuous spectrum of $h_0(v)$ is
$\sigma_{\rm ac}(h_0(v))=[-2,2]\cup [-2+v,2+v]$ while the pure point part is independent of $v$ and given by 
$\sigma_{\rm pp}(h_0(v))=\{E_0\}$. 

In Section \ref{appendix} we will treat in great detail the spectral
properties of $h(v)$ as a function of $v$. 
In particular, we will show 
that if $E_0$ is large enough and $\tau$ small enough, then $h(v)$ has
a unique discrete eigenvalue 
$\lambda(v)$ in the 
interval $(v+2,\infty)$ as long as $v$ is strictly smaller than a
critical value $v_{c,1}$. When $v= v_{c,1}$, there is no point
spectrum at the threshold $v_{c,1}+2$ (see Proposition \ref{propfeb1}). 
Moreover, there exists a second critical value $v_{c,2}> v_{c,1} $ such that if $v\in [v_{c,1},v_{c,2}]$ the spectrum of 
$h(v)$ is purely absolutely continuous, but if $v>v_{c,2}$, an eigenvalue $\lambda(v)$ appears in the interval $(2, v-2)$ and stays there. 

In order to model the switching procedure, we need to make $v$ time dependent. We will only consider switching procedures for which $v(s)=0$ if $s\leq -1$. Moreover, we make the following assumption:
\begin{assumption}
$v$ is twice piecewise differentiable on $(-1,0)$ with uniformly bounded second derivative, and $v$ is continuous at $-1$ and $0$.
\end{assumption}
We are interested in situations where the discrete spectra of $h(v(-1))$ and $h(v(0))$ consist of precisely one eigenvalue. We will only consider the following three generic situations.
\begin{enumerate}
\item The first case is when the potential bias $v$ is $\mathcal{ C}^2$ on $[-1,0]$ and does not cross the critical
values, such that the discrete instantaneous eigenvalue is always present. See Figure~\ref{f0}.
\item The second situation is when the bias potential crosses twice the critical value $v_{c,1}$, causing the instantaneous eigenvalue
$\lambda(v(s))$ to disappear at some point in time and then to reappear at a later moment.  We model this potential bias by a function
$v:[-1,0]\to [0, v_{c,2})$, continuous at $-1$ and $0$, with $v(-1)=0$ and $v(0)= v_{c,1}-1$. We assume that
there exist $-1<s_{c}<s'_{c}<0$ and  $0<\delta<<1$ such that: 
\begin{align}\label{mars6}
v(s_{c}-0)&= v(s'_{c}+0)=v_{c,1}-\delta, \nonumber \\
 v(s_{c})&=v(s_{c}+0)=v_{c,1}+\delta=v(s'_{c})=v(s'_{c}-0),\nonumber \\
 v_{c,1}+\delta&\leq v(s)<  v_{c,2},\; s\in [s_{c}, s'_{c}];\nonumber \\
v(s)&\leq v_{c,1}-\delta,\; s\not\in [s_{c}, s'_{c}].
\end{align}
See Figure~\ref{f1}.

Keeping in mind that the coupling constant $\tau$ in $h_T$ must be small, 
this potential bias insures that the instantaneous Hamiltonian
$h(v(s))$ will have exactly one discrete eigenvalue if $s\in
[-1,s_{c})\cup (s'_{c},0]$, and purely absolutely continuous
spectrum for $s\in [s_{c},s'_{c}]$. At the end we will let $\delta$ go to
zero, but until then the potential bias has a small discontinuous jump at $s_{c}$ and $s'_{c}$. 

\item The third physically interesting switching procedure (from the mathematical point of view being though closely related to the second situation) 
is the one in which the bias causes the instantaneous eigenvalue $\lambda(v)$ to disappear into the continuous band $[-2+v,2+v]$ for $v_{c,1}\leq v\leq 
v_{c,2}$ 
and to reappear and stay in the interval $(2,v-2)$ for $v>v_{c,2}$. More precisely, we consider 
an increasing function $v:[-1,0]\to [0, v_{c,2}+1]$, continuous at $-1$ and $0$ with $v(-1)=0$ and $v(0)= v_{c,2}+1$. 
We assume that there exist $-1<s_{c}<s'_{c}<0$ and  $0<\delta<<1$ such that: 
\begin{align}\label{mars60}
&v(s_{c}-0)=v_{c,1}-\delta,\qquad\quad v(s'_{c}+0)=
v_{c,2}+\delta, \nonumber \\
 &v(s_{c})=v(s_{c}+0)=v_{c,1}+\delta,\quad v(s'_{c})=v(s_{c}'-0)
=v_{c,2}-\delta.
\end{align}
See Figure~\ref{f2}.
\end{enumerate}
\begin{figure}\centering
\includegraphics[scale=0.8]{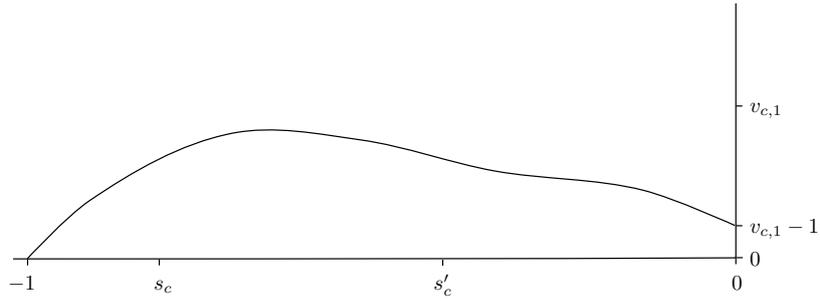}
\caption{The first situation}\label{f0}
\end{figure}
\begin{figure}\centering
\includegraphics[scale=0.8]{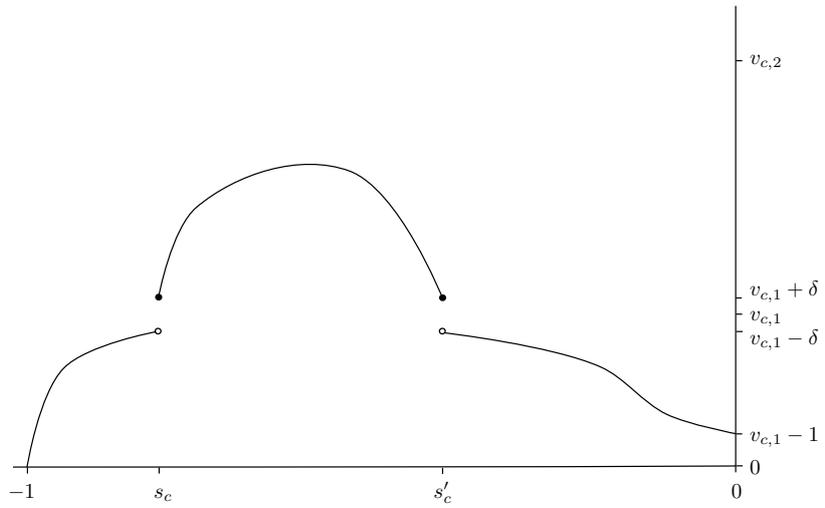}
\caption{The second situation}\label{f1}
\end{figure}
\begin{figure}\centering
\includegraphics[scale=0.8]{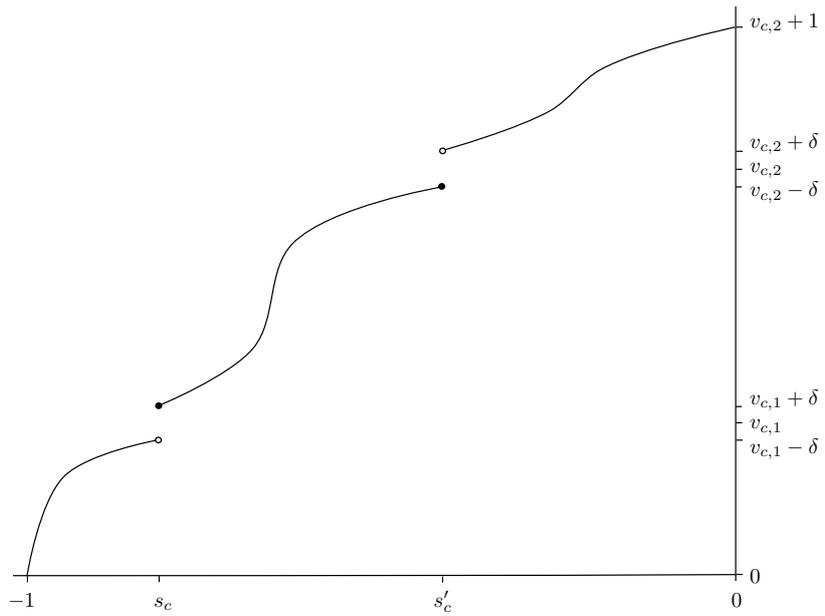}
\caption{The third situation}\label{f2}
\end{figure}

 In what follows, we will replace the notation  $h(v(s))$ by $h(s)$ at all points where $v$ is continuous, and also write $h(s\pm 0):= h(v(s\pm 0))$ for the (finite) set of points $s$ at which $v$ has discontinuities. Moreover, we will adopt the same convention for various functions of $h(s)$ e.g.:
\begin{equation}
 P_d(s)=P_d(h(s))= |\psi (s)\rangle \langle   \psi(s) |  ,\;\; P_{ac}(s)=P_{ac}(h(s)),
\end{equation}
where $\psi(s)$ is an eigenstate corresponding to the (only one) eigenvalues of $h(s)$. We recall that
$P_{ac}(h_L)=\Pi_- +\Pi_+$. Let us point out that the wave  operators 
\begin{equation}\label{mf2}
\Omega(s):=\slim_{t\to -\infty}e^{it h(s)}e^{-it h_L(s)}P_{ac}(h_L)
\end{equation}
exist and are asymptotically complete \cite{Yafaev}.

\subsection{The results}

 If $\eta>0$ is a small adiabatic parameter, we consider the time dependent Hamiltonian
$$h(v(\eta t))=h_S+h_L+v(\eta t)\Pi_-+h_T=h_0(v(\eta t))+h_T.$$
We denote by $U(t,t_0)$ the unitary solution of the time dependent
Schr\"odinger equation
$$iU'(t,t_0)=h(\eta t)U(t,t_0),\quad U(t_0,t_0)=1,\quad
-\eta^{-1}<t,t_0<0.$$ 
The initial equilibrium state of the system is characterized by a
density matrix operator 
$\rho_{\rm eq}$ which is assumed to be a function of the operator
$h(-1)=h_0+h_T$, i.e. 
$\rho_{\rm eq}=f_{\rm eq}(h(-1))$. An example could be $f_{\rm
  eq}(x)={1}/{(e^{\beta(x-\mu)}+1)}$ with $\beta>0$ and
$\mu\in\mathbf{R}$. For physical reasons we demand that $0<\|f_{\rm
  eq}\|_\infty\leq 1$. Note that the one-particle density matrix is not a trace class operator. Given a trace class observable $A$, its equilibrium expectation is given by 
  $\Tr \{\rho_{\rm eq}A \}$. 
  
 In the remote past $t<-\eta^{-1}$, the density matrix operator equals $\rho_{\rm eq}$ and it is time
independent. At time $-\eta^{-1}$ we start out the potential 
bias $v$ and we let it evolve. 

The density matrix operator solves the Liouville equation $i\rho_{\eta}'(t)=[h(\eta t), \rho_{\eta}(t)]$ on the interval 
$-\eta^{-1}< t< 0$, and it is given by the formula 
$$\rho_{\eta}(t):=U(t,-\eta^{-1})\rho_{\rm
  eq}U(t,-\eta^{-1})^*.$$
  
  At $t=0$, our state will be described by:
$$\rho_{\eta}:=\rho_{\eta}(0)=U(0,-\eta^{-1})\rho_{\rm
  eq}U(0,-\eta^{-1})^*=U(-\eta^{-1},0)^*\rho_{\rm
  eq}U(-\eta^{-1},0),$$ 
where in the last equality we used the fact that $U(t,t')^*=U(t',t)$.

If $A\in B_1(\mathcal{H})$ is a given self-adjoint trace class observable, 
the question we want to answer in all three cases is the existence of the following adiabatic limit:
\begin{equation}\label{mars10}
 \langle A\rangle:=\lim_{\eta \searrow 0}\; \Tr \{\rho_{\eta}A \}=
\lim_{\eta \searrow 0}\; \Tr \{U(-\eta^{-1},0)^*f_{\rm eq}(h)U(-\eta^{-1},0)A \}.
\end{equation}
If $A= P_d(0)=|\psi(0)\rangle \langle \psi(0)|$, then $\langle A\rangle$ represents the probability of arriving at the final bound state at the end of the switching procedure. Now 
here is our main result:

\begin{theorem}\label{mainth} We have the following situations:

 \noindent {\rm (i)}  If the potential bias $v$ is chosen in such a way that the
instantaneous eigenvalue $\lambda(s)$ of $h(s)$ is always bounded away from the instantaneous continuous spectrum and stays in the interval $(v(s)+2,\infty)$, then:
\begin{equation}\label{mars3411}
\langle A\rangle =\Tr\{\Omega (0)f_{\rm eq}(h_L)P_{{\rm ac}}(h_L)\Omega (0)^*A \}
+\langle \psi(0)\,|\,A\psi(0)\rangle\; f_{\rm eq}(\lambda(-1)).
\end{equation}
If $A= P_d(0)=|\psi(0)\rangle \langle \psi(0)|$, then $\langle P_d(0)\rangle =f_{\rm eq}(\lambda(-1))$.
 
\noindent {\rm (ii)}
If the potential bias $v$ satisfies either \eqref{mars6} or \eqref{mars60} then
\begin{align}\label{mars341}
\langle A\rangle&=\Tr\{\Omega (0)f_{\rm eq}(h_L)P_{{\rm ac}}(h_L)\Omega (0)^* A \} +\langle \psi(0)\,|\,A\psi(0)\rangle \nonumber\\
&\quad{}\cdot  \langle \psi (s'_c+0)\,|\,
  \Omega (s'_c-0) f_{\rm eq}(h_L)P_{{\rm ac}}(h_L)\Omega (s'_c-0)^*\psi (s'_c+0)\rangle.
\end{align}
Moreover, 
\begin{equation}\label{mars40a}
\langle P_{\rm d}(0)\rangle
= 
\langle \psi (s'_c+0)\,|\,
  \Omega (s'_c-0) f_{\rm eq}(h_L)P_{{\rm ac}}(h_L)\Omega (s'_c-0)^*\psi (s'_c+0)\rangle .
\end{equation}
\end{theorem}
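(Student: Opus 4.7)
For part (i), I decompose the initial density into its discrete and absolutely continuous parts, $\rho_{\rm eq} = f_{\rm eq}(\lambda(-1))\, P_d(-1) + f_{\rm eq}(h(-1))\, P_{\rm ac}(h(-1))$. On the discrete piece the Kato adiabatic theorem with gap applies, since by construction $\lambda(s)$ is uniformly separated from $\sigma_{\rm ess}(h(s))$ for $s\in[-1,0]$; this gives $U(0,-\eta^{-1})\, P_d(-1)\, U(0,-\eta^{-1})^* \to P_d(0)$ in operator norm. On the AC piece I combine the intertwining relation $f_{\rm eq}(h(-1))\, P_{\rm ac}(h(-1)) = \Omega(-1)\, f_{\rm eq}(h_L)\, P_{\rm ac}(h_L)\, \Omega(-1)^*$, which uses $v(-1)=0$ so that $h_L(-1)=h_L$, with the adiabatic scattering identity $\slim_{\eta\to 0} U(0,-\eta^{-1})\,\Omega(-1) = \Omega(0)$ on $\Ran(P_{\rm ac}(h_L))$, a result of the flavour established in \cite{CDP}. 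Since $A$ is trace class, these strong convergences may be inserted under the trace to yield \eqref{mars3411}.

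For part (ii), I partition the adiabatic evolution into three phases separated by the jump times $s_c/\eta$ and $s'_c/\eta$. On $[-1, s_c-0]$ and on $[s'_c+0, 0]$ there is a uniform spectral gap between $\lambda(s)$ and $\sigma_{\rm ess}(h(s))$, while on $[s_c+0, s'_c-0]$ the instantaneous Hamiltonian is purely absolutely continuous. Applying the reasoning of part (i) to the first phase gives, in a sense suitable for testing against trace class observables,
\begin{equation*}
U(s_c/\eta,-\eta^{-1})\, \rho_{\rm eq}\, U(s_c/\eta,-\eta^{-1})^* \longrightarrow \Omega(s_c-0)\, f_{\rm eq}(h_L)\, P_{\rm ac}(h_L)\, \Omega(s_c-0)^* + f_{\rm eq}(\lambda(-1))\, P_d(s_c-0).
\end{equation*}
The jump at $s_c$ leaves the density matrix invariant but recasts the rank-one piece as a localized vector lying entirely inside $\Ran(P_{\rm ac}(h(s_c+0)))$.

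On the middle phase the adiabatic scattering identity is again available, now in the form $\slim_{\eta\to 0} U(s'_c/\eta, s_c/\eta)\, \Omega(s_c+0) = \Omega(s'_c-0)$ on $\Ran(P_{\rm ac}(h_L))$, so the AC contribution is transported to $\Omega(s'_c-0)\, f_{\rm eq}(h_L)\, P_{\rm ac}(h_L)\, \Omega(s'_c-0)^*$. The non-trivial ingredient is showing that the localized old-bound-state piece disperses to the leads during the macroscopic interval of length $(s'_c-s_c)/\eta$, in the sense that $\langle \varphi,\, U(s'_c/\eta, s_c/\eta)\, P_d(s_c-0)\, U(s'_c/\eta, s_c/\eta)^*\, \varphi\rangle \to 0$ for every localized $\varphi$. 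Granting this, the limiting state at $s'_c-0$ reduces to its AC component, and the weight picked up by the emerging bound state $P_d(s'_c+0)$ is precisely the right-hand side of \eqref{mars40a}. A final application of the Kato adiabatic theorem with gap on $[s'_c+0, 0]$ carries this weight to $P_d(0)$ and the AC part to $\Omega(0)\, f_{\rm eq}(h_L)\, P_{\rm ac}(h_L)\, \Omega(0)^*$, yielding \eqref{mars341}.

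The main technical obstacle is the dispersion estimate on the middle phase: the wave packet inherited from the pre-$s_c$ bound state must leave a vanishing trace near the impurity $|S\rangle$ by time $s'_c/\eta$, uniformly in the adiabatic parameter $\eta$. I expect this to follow from asymptotic completeness of $\Omega(s)$ combined with propagation estimates for the time-dependent evolution on $[s_c+0, s'_c-0]$, and to be the precise manifestation of the adiabatic pair creation/annihilation mechanism of \cite{Nenciu2, Nenciu3, PD}: the old bound state is absorbed into the continuum at $s_c$ and a new one is created from the continuum at $s'_c$, with no direct tunneling between them, so the weight of the final bound state is determined entirely by the AC density at the moment of its re-emergence.
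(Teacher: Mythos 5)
Your decomposition and overall route match the paper's: split the initial density into discrete and AC components, propagate the discrete piece with the Kato adiabatic theorem on the gapped intervals, propagate the AC piece via the wave operators, and in case (ii) show that the pre-$s_c$ bound state contributes nothing to $P_d(0)$ while the AC density deposits weight on $P_d(s'_c+0)$. The paper organizes this slightly differently (inserting $P_d+P_{\rm ac}$ on \emph{both} sides of $U$, giving four terms) but the mechanism is the same. There are, however, two concrete gaps.

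First, the ``adiabatic scattering identity'' you invoke, $\slim_{\eta\to 0} U(0,-\eta^{-1})\Omega(-1)=\Omega(0)$, is false as stated. The correct statement is Proposition~\ref{propestim}: $U(s_{-1}/\eta,s_0/\eta)\Omega_f P_{\rm ac}(h_L)\Psi\sim\Omega_i\, e^{-i\frac{s_{-1}-s_0}{\eta}h_L}e^{i\frac{\Delta\nu}{\eta}\Pi_-}P_{\rm ac}(h_L)\Psi$. The oscillating phase does not go away in the adiabatic limit; it merely commutes with $f_{\rm eq}(h_L)P_{\rm ac}(h_L)$, which is why it cancels in the continuous--continuous trace. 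You also do not prove this estimate. The paper devotes most of Section~\ref{section3} to it (a Cook/Dyson argument combined with pointwise $\langle\tau\rangle^{-2}$ bounds on the free propagation).

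Second, and more importantly, you flag the ``dispersion estimate'' on the middle phase as the main technical obstacle and leave it as a hope (``I expect this to follow from asymptotic completeness and propagation estimates''), but this is precisely where the omitted phase factor does the work. In the paper the dd-term is reduced, via the adiabatic theorem on the two gapped intervals and the group property, to $\langle\psi(s_c-0)\,|\,U(s_c/\eta,s'_c/\eta)\psi(s'_c+0)\rangle$, which by Proposition~\ref{propestim} is asymptotic to $\langle\Omega(s_c+0)^*\psi(s_c-0)\,|\,e^{-i\frac{s_c-s'_c}{\eta}h_L}e^{i\frac{\Delta v}{\eta}\Pi_-}\Omega(s'_c-0)^*\psi(s'_c+0)\rangle$. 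Both bracketing vectors are fixed elements of $\Ran P_{\rm ac}(h_L)$, so the oscillating exponential squeezed between them tends to zero. That is the ``dispersion'': it is not a separate propagation input, it is exactly the residual phase in the identity you dropped. Without retaining that phase your argument cannot close, and naively replacing it by $1$ would incorrectly predict a surviving dd contribution in case (ii).

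A lesser remark: you should also check that the discrete--to--AC cross term vanishes. In your coarse splitting this is automatic for the discrete piece (operator-norm convergence of $U P_d(-1)U^*$ to $P_d(0)$ kills it), but for the AC piece in case (ii) the convergence to $\Omega(s'_c-0)f_{\rm eq}(h_L)P_{\rm ac}(h_L)\Omega(s'_c-0)^*$ holds only weakly, and the extraction of the $P_d(s'_c+0)$ weight needs to be justified at the level of matrix elements, as in the paper's treatment of the ``cd'' term via trace cyclicity and \eqref{mars16'}.
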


\begin{remark}  The first terms in both \eqref{mars3411} and \eqref{mars341} are identical and 
 only depend upon the initial and final values of the bias potential. Moreover, they are the only ones contributing to the adiabatic charge current since in that case $A=i[h(v(0)),\Pi_\pm]$ which implies  $\langle \psi(0)\,|\,A\psi(0)\rangle =0$, thus the second terms disappear.  This is consistent with previous results \cite{CNZ} showing that the adiabatic limit of the current does not depend upon the specific form of the switching. On the contrary, while in ({\rm i}) the second term in the r.h.s. of  \eqref{mars3411}) again only depends upon the initial and final values of the bias potential, in  ({\rm ii}) the second term in the r.h.s. of \eqref{mars341} depends upon the switching procedure but {\em only} via the behavior of the bias potential in the neighborhood of $s'_c$, i.e. at the last passage through a critical value. 
 The fomulas \eqref{mars3411} and \eqref{mars3411} are explicit, but the price is that we assumed a jump of size $2\delta$ for the bias potential at the critical values.
\end{remark}

Our second main result gives $\langle P_{\rm d}(0)\rangle$ for all three situations in the limit $\delta \searrow 0$, i.e. when the discontinuity jump in the bias potential shrinks to zero.
\begin{proposition}\label{propodilo}
\noindent {\rm (i)}
If the bias $v$ varies in such a way that the discrete eigenvalue $\lambda(s)$ of $h(s)$ always remains
separated from the continuous spectrum {\rm(}see Figure \ref{f0}{\rm)}, then:
 \begin{equation}\label{mf1}
\langle P_{\rm d}(0)\rangle=f_{\rm eq}(\lambda(-1)).
 \end{equation}
\noindent {\rm (ii)}
If $\lambda(s)$ enters the continuous spectrum but reappears and stays in the interval $(v(s)+2,\infty)$ {\rm(}see Figure \ref{f1}{\rm)}, we have:
 \begin{equation}
  \lim_{\delta \searrow 0}\langle P_{\rm d}(0)\rangle=f_{\rm eq}(2).
 \end{equation}
\noindent {\rm (ii)}
If $\lambda(s)$ enters the continuous spectrum but reappears and stays in the interval $(2,v(s)-2)$ {\rm(}see Figure \ref{f2}{\rm)}, we have:
 \begin{equation}
  \lim_{\delta \searrow 0}\langle P_{\rm d}(0)\rangle=f_{\rm eq}(-2).
 \end{equation}
\end{proposition}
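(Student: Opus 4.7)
Part~(i) follows immediately from Theorem~\ref{mainth}(i): at $A=P_{\rm d}(0)=|\psi(0)\rangle\langle\psi(0)|$ the first term of~\eqref{mars3411} vanishes because $\Ran\Omega(0)\subset P_{\rm ac}(h(0))\mathcal{H}\perp\psi(0)$, leaving $\langle P_{\rm d}(0)\rangle=f_{\rm eq}(\lambda(-1))$, which carries no $\delta$ dependence.

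For parts~(ii) and~(iii) the starting point is formula~\eqref{mars40a}. Set $\chi_{\delta}:=\Omega(s'_{c}-0)^{*}\psi(s'_{c}+0)$. Since $v(s'_{c}-0)\in(v_{c,1},v_{c,2})$, $h(s'_{c}-0)$ has purely absolutely continuous spectrum, so by asymptotic completeness $\Omega(s'_{c}-0)^{*}$ is a unitary from $\mathcal{H}$ onto $P_{\rm ac}(h_{L})\mathcal{H}$; in particular $\|\chi_{\delta}\|=1$, and the scalar spectral measure $\mu_{\delta}$ of $\chi_{\delta}$ relative to $h_{L}$ is a probability measure supported on $[-2,2]$. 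Formula~\eqref{mars40a} then rewrites as
$$\langle P_{\rm d}(0)\rangle=\int_{-2}^{2}f_{\rm eq}(x)\,d\mu_{\delta}(x),$$
and the task reduces to establishing the weak convergences $\mu_{\delta}\rightharpoonup\delta_{+2}$ in case~(ii) and $\mu_{\delta}\rightharpoonup\delta_{-2}$ in case~(iii); continuity of $f_{\rm eq}$ on $[-2,2]$ will then yield the stated limits.

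The convergence is driven by a threshold mechanism. The wave operator $\Omega(s'_{c}-0)$ intertwines the biased-lead operator $h_{L}(s'_{c}-0)=h_{L}+v(s'_{c}-0)\Pi_{-}$ with $h(s'_{c}-0)$ on $P_{\rm ac}(h_{L})\mathcal{H}$, so the spectral measure of $\chi_{\delta}$ for $h_{L}(s'_{c}-0)$ equals that of $\psi(s'_{c}+0)$ for $h(s'_{c}-0)$. Writing $h(s'_{c}+0)=h(s'_{c}-0)\mp 2\delta\Pi_{-}$ in the eigenvalue equation gives
$$(h(s'_{c}-0)-\lambda(s'_{c}+0))\psi(s'_{c}+0)=\pm 2\delta\,\Pi_{-}\psi(s'_{c}+0),$$
whose norm is at most $2\delta$, hence by functional calculus and Chebyshev the second moment of that spectral measure about $\lambda(s'_{c}+0)$ is $O(\delta^{2})$. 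Thus the spectral mass clusters sharply at $\lambda(s'_{c}+0)$, which converges to the band edge $v_{c,1}+2$ (case~ii) or $v_{c,2}-2$ (case~iii). Moreover, since the eigenvalue approaches this edge of the biased-lead band $[-2+v,2+v]$, the eigenvector itself concentrates in the subspace $\Pi_{-}\mathcal{H}$; on that subspace $h_{L}=h_{L}(s'_{c}-0)-v(s'_{c}-0)\Pi_{-}$, so translating the spectral measure back from shifted to unshifted energies produces concentration of $\mu_{\delta}$ at $\lambda(s'_{c}+0)-v(s'_{c}-0)\to +2$ in case~(ii) and at $\lambda(s'_{c}+0)-v(s'_{c}-0)\to -2$ in case~(iii).

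The main obstacle is the spatial concentration statement: proving rigorously that $\|\Pi_{+}\psi(s'_{c}+0)\|+|\langle S|\psi(s'_{c}+0)\rangle|\to 0$ as the eigenvalue approaches the band edge. The natural tool is the Feshbach/Grushin reduction to $\mathbf{C}|S\rangle$ developed in Section~\ref{appendix}: solving the eigenvalue equation in the complementary subspace writes the lead components of $\psi(s'_{c}+0)$ explicitly in terms of the free-lead Green's function at $\lambda(s'_{c}+0)$, from which the biased-lead component is seen to grow like an almost-plane-wave at the threshold wavenumber while the right-lead and site components remain $O(1)$, so that after normalization the former dominates. Inserting this asymptotic form into the spectral integral and applying dominated convergence concludes the proof.
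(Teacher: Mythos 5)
Your treatment of part~(i) matches the paper's (both declare it an immediate consequence of Theorem~\ref{mainth}(i)), and your strategy for parts~(ii)--(iii) is genuinely different from the paper's and, at heart, a bit more elementary. The paper reasons in operator norm: it introduces the Riesz projection $P(v)$ onto the spectral band inside a contour around $[-3+v_{c,1},3+v_{c,1}]$, compares $P(v)$ with $\Pi_-$ via the localization of $h_T$ and exponential decay of the resolvent kernels near the diagonal (cf.~\eqref{mars43}--\eqref{mars44}), then invokes the delocalization of $\psi_{c,\delta}$ into the left lead (a consequence of Proposition~\ref{febprop2}, i.e.\ $\psi_{c,\delta}\rightharpoonup 0$ and the explicit form~\eqref{mars0}) to insert $P(v)$ into~\eqref{mars40a}, and finally translates by the bias via $\Pi_-f_{\rm eq}(h_L)=\Pi_-f_{\rm eq,\delta}(h_L+(v_{c,1}+\delta)\Pi_-)$ together with the intertwining relation and norm continuity of $v\mapsto h(v)$. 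You instead pass immediately to the scalar spectral measure $\mu_\delta$ of $\chi_\delta=\Omega(s'_c-0)^*\psi(s'_c+0)$ and prove weak convergence of $\mu_\delta$ to a point mass at the band edge via a Chebyshev/second-moment estimate driven by the discontinuity of the bias. Both routes are sound, and your intertwining identity and the bound $\int(x-\lambda(s'_c+0))^2\,d\nu_\delta\le 4\delta^2$ for the spectral measure $\nu_\delta$ of $\chi_\delta$ w.r.t.~$h_L(s'_c-0)$ are exactly right.

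However, your final paragraph mistakes a non-issue for the ``main obstacle.'' You claim one still needs a separate spatial-concentration statement for $\psi(s'_c+0)$, i.e.\ that $\|\Pi_+\psi(s'_c+0)\|+|\langle S|\psi(s'_c+0)\rangle|\to 0$, and you propose a Feshbach/Grushin analysis of the eigenvector to get it. This is unnecessary. First, $\chi_\delta$ lives in $P_{\rm ac}(h_L)\mathcal{H}=(\Pi_-+\Pi_+)\mathcal{H}$ by construction, so its $|S\rangle$-component vanishes identically; the quantity $|\langle S|\psi(s'_c+0)\rangle|$ never enters $\mu_\delta$. Second, and more importantly, $\|\Pi_+\chi_\delta\|\to 0$ is already a direct corollary of your Chebyshev estimate: $\Pi_\pm$ reduce $h_L(s'_c-0)$, the $\Pi_+$-reduced part has spectrum $[-2,2]$, and $\lambda(s'_c+0)\to v_{c,1}+2\approx E_0\geq 10$ (resp.\ $v_{c,2}-2\approx E_0$) stays at distance $\ge 4$ from $[-2,2]$; hence
\begin{equation*}
4\delta^2\;\ge\;\int (x-\lambda(s'_c+0))^2\,d\nu_\delta\;\ge\;\operatorname{dist}\bigl(\lambda(s'_c+0),[-2,2]\bigr)^2\,\|\Pi_+\chi_\delta\|^2\;\ge\;16\,\|\Pi_+\chi_\delta\|^2,
\end{equation*}
so $\|\Pi_+\chi_\delta\|=O(\delta)$. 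With that, the spectral measure of $\Pi_-\chi_\delta$ (which has mass $1-O(\delta^2)$) w.r.t.~$h_L$ is just $\nu_\delta$ restricted to the $\Pi_-$-sector and shifted by $-v(s'_c-0)$, so $\mu_\delta$ concentrates at $\lambda(s'_c+0)-v(s'_c-0)\to \pm 2$, and continuity of $f_{\rm eq}$ finishes the argument. In short: once you have the second-moment bound and the clean spectral separation of the two lead bands, there is nothing left to prove about $\psi(s'_c+0)$ itself; the Feshbach/Grushin detour you sketch in the last paragraph is a red herring.
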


\begin{remark}
 It might happen that $f_{\rm
  eq}(\lambda (-1))=0$, i.e. the initial state has no discrete component. If the continuous spectrum is not crossed, the 
probability of finding the system in the final discrete state is still zero (see \eqref{mf1}). But if either $f_{\rm
  eq}(2)=1$ or $f_{\rm
  eq}(-2)=1$, by entering the continuous band this probability can be made equal to $1$. 

One might think that Proposition \ref{propodilo} (ii) and (iii) also cover the generic (i.e. without jumps) bias potentials. Unfortunately this is not  true since we {\em first} took the adiabatic limit $\eta \searrow 0$, and after that $\delta \searrow 0$. In order to cover the general case one has to perform the limits in the reversed order. As it is known from the spontaneous pair creation case \cite{Nenciu2,Nenciu3,PD} this is a hard technical problem (even more demanding for the problem at hand) since one has to control the evolution in the adiabatic limit near the critical times and this relies on detailed (model dependent) spectral and scattering study near criticality.  Notice that contrary to the spontaneous pair creation problem, in our case one has to deal with a critical Hamiltonian having a resonance at the threshold, and not a bound state.  We believe that a careful study of spectral and scattering theory near energetic thresholds enlarging and streamlining the results in \cite{P,PD,JN,RU} is needed in order to solve this very interesting open problem.

We can allow multiple crossings of both critical values, and the results are rather similar. The 
probability of finding the system on the final discrete state after taking the limit  $\delta\searrow 0$ will be given by $f_{\rm
  eq}(\pm 2)$, depending on which critical value was the last one crossed. 
\end{remark}

\subsection{The contents of the paper}

After this introductory section which also included the main results of our paper, we continue in Section \ref{appendix} with a detailed spectral analysis 
of the operator $h(v)$. The results are rather straigthforward but are needed in order to study the behavior of the discrete eigenvalue around 
the critical values of the bias. In Propositions \ref{propfeb1} and \ref{febprop2} we show that the eigenvector 
corresponding to this discrete eigenvalue becomes more and more delocalized in the leads as we get closer and closer to the critical values. 

In Section \ref{section3} we prove Theorem \ref{mainth}. The main technical tool can be found in Proposition \ref{propestim}, which in some sense can 
be considered to be an adiabatic theorem for the continuous spectrum (see \cite{AEGS,NT,MN,D} for   results about adiabatic limit in scattering theory). 

In Section \ref{section4} we prove Proposition \ref{propodilo}. More precisely, we investigate the limit $\delta\searrow  0$ of the probability of finding 
the system in the final bound state. The fact that the discrete eigenvector becomes very delocalized when we approach the critical values of the bias plays 
 a crucial role in the proof.

\section{Spectral analysis of $h(v)$}\label{appendix}

Introduce the notation $R(z)$ for the inverse of $h(v)-z$ in ${\mathcal H}$, and 
$R_L(z)$ for the inverse of $h_L+v\Pi_--z$ in ${\mathcal H}_L$, extended by $0$ on ${\mathcal H}_S$, i.e. 
$$R_L(z)=0\oplus (h_-+v-z)^{-1}\oplus (h_+-z)^{-1}.$$
Note that the $v$ dependence is not made explicit in this notation.

Let us split the tunneling 
Hamiltonian $h_T$ as follows:
\begin{equation}
h_T=h_{LS}+h_{SL}, \qquad h_{LS}=\tau |0_{-}\rangle\langle S|+\tau |0_{+}\rangle\langle S|=h_{SL}^*.
\end{equation}
Define the effective Hamiltonian as $h_{{\rm eff}}(z):=(E_0-z)|S \rangle \langle S|-h_{SL}R_L(z)h_{LS}$; when restricted 
to ${\mathcal H}_S$, the effective Hamiltonian is 
$h_{{\rm eff}}(z)=G(z;v)|S \rangle \langle S|$, where
\begin{equation}\label{Heff}
G(z;v):=E_0-z-\tau^2\langle 0_-\,|\, (h_-+v-z)^{-1} 0_-\rangle -
\tau^2 \langle 0_+\,|\, (h_+-z)^{-1} 0_+\rangle.
\end{equation} 
Then the Feshbach formula for $R(z)$ reads (see e.g \cite{CJM1}):
\begin{equation}\label{feshchc}
R(z)=R_L(z)+\frac{1}{G(z;v)}\{1-R_L(z)h_{LS}\}|S\rangle\langle S|
\{1-h_{SL}R_L(z)\}.
\end{equation}
Let us denote by $r_\pm(z)=(h_\pm-z)^{-1}$ the resolvents of the leads extended by $0$ outside $l^2(\mathbf{N}_\pm)$ 
in the obvious way. With this notation \eqref{feshchc} writes:
\begin{align}\label{feb2}
R(z)&=R_L(z)+\frac{1}{G(z;v)}|S\rangle\langle S| 
-\frac{\tau}{G(z;v)}|S\rangle \langle 0_-| r_-(z-v)\nonumber\\ &\quad-\frac{\tau}{G(z;v)}|S\rangle \langle 0_+|r_+(z)
+\frac{\tau^2}{G(z;v)}\Bigl[-\frac{1}{\tau}r_-(z-v) |0_-\rangle \langle S| \nonumber\\
&\qquad+r_-(z-v)|0_-\rangle 
\langle 0_-| r_-(z-v)
+ r_-(z-v)|0_-\rangle 
\langle 0_+| r_+(z)\Bigr]\nonumber\\ 
&\quad+\frac{\tau^2}{G(z;v)}\Bigl[-\frac{1}{\tau}r_+(z) |0_+\rangle \langle S| +r_+(z)|0_+\rangle 
\langle 0_-| r_-(z-v)\nonumber \\
&\qquad+ r_+(z)|0_+\rangle 
\langle 0_+| r_+(z)\Bigr].
\end{align}
What is needed in \eqref{Heff} and \eqref{feb2} are the formulas for $r_\pm(z)$. We recall them from \cite{CJM1}.

Let
\begin{equation}\label{matqa2}
\zeta_1(z)=\frac{z}{2}\bigl(1-\sqrt{1-4/z^2}\bigr).
\end{equation}
where $\sqrt{x}$ is chosen to have the cut $(-\infty,0)$ i.e. $\sqrt{x}:=\sqrt{|x|}e^{(i/2)\arg(x)}$,
$\arg(x)\in
(-\pi,\pi)$. Notice that $\zeta_1(z)$ is real analytic in $\mathbf{C}\setminus [-2,2]$ and vanishes at infinity like $1/z$. Also, $\zeta_1$ is decreasing 
on the separate intervals $(-\infty,-2)$ and $(2,\infty)$, while $\zeta_1(\pm 2)=\pm 1$. 

As it was shown in \cite{CJM1} we have: 
\begin{align}\label{matrellead}
\langle 0_\pm,(h_\pm -z)^{-1}0_\pm \rangle &=-\zeta_1(z),\nonumber \\
\langle m_\pm,(h_\pm -z)^{-1}n_\pm \rangle &=
\frac{\zeta_1(z)}{\zeta_1^2(z)-1}\bigl(\zeta_1(z)^{|m-n|}
-\zeta_1(z)^{m+n+2}     \bigr),\; n,m\geq 0.
\end{align}

In particular \eqref{Heff} becomes:
\begin{align}\label{feb1}
G(z;v)=E_0-z+\tau^2 \{\zeta_1(z-v)+\zeta_1(z)\}.
\end{align}

\subsection{The point spectrum of $h(v)$ for $v\geq 0$} 

As it is well known from the Feshbach lemma, the discrete spectrum of the coupled operator $h(v)$ is given by the real solutions of the 
equation $G(x;v)=0$, where 
$x\not\in[-2,2]\cup[-2+v,2+v]$. In the introduction we announced that we are interested 
in the case when $E_0$ is large and far from the unbiased continuous spectrum, 
while the coupling $\tau$ is weak. Let us assume that 
 $E_0\geq 10$ and $0<|\tau|<<1$. 
 
 For any fixed $v\geq 0$, by differentiating in \eqref{Heff} we obtain:
\begin{equation}\label{feb3}
\partial_xG(x;v)\leq -1,\quad  x\not\in[-2,2]\cup[-2+v,2+v].
\end{equation}
which means that the map $G(\cdot;v)$ is strictly decreasing on the intervals $(-\infty,-2)$ and 
$(2+v,\infty)$. If $v>4$ then $G(\cdot;v)$ is also decreasing on $(2,-2+v)$. In addition, 
\begin{equation}\label{Ginf}
\lim_{x\to \pm \infty}G(x;v)=\mp \infty.
\end{equation}
We first show that there are no eigenvalues in the interval $(-\infty,-2)$. Indeed,  using \eqref{Ginf} and the 
inequality $G(-2;v)\geq  E_0+2-\tau^2 \sup_{\nu \geq 0}|\zeta_1(-2-\nu)+\zeta_1(-2)|>0$ (remember that  
$\tau$ is small enough and $E_0\geq 10$), we conclude that $G$ does not change sign on $(-\infty,-2)$, 
hence there are no discrete eigenvalues there. 

Now consider the interval $(2+v,\infty)$. Because $G(\cdot,v)$ is strictly decreasing, we have that:
$$ -\infty<G(x;v)\leq G(2+v;v),\quad 2+v\leq x<\infty.$$ For all $v\geq 0$, formula \eqref{feb1} gives: 
\begin{align}\label{feb4}
G(2+v;v)=E_0-2-v+\tau^2 \{\zeta_1(2)+\zeta_1(v+2)\}.
\end{align}
There exists a (unique) eigenvalue if and only if $G(2+v;v)>0$, thus we need to investigate $G(2+v;v)$ as a 
function of $v$.
 
Because $\zeta_1$ is decreasing on $(2,\infty)$ we have that $G(2+v;v)$ is strictly decreasing with $v$. 
If $0\leq v\leq 4$ then 
$G(2+v;v)\geq 4-\tau^2 \sup_{\nu \geq 0}|\zeta_1(2)+\zeta_1(\nu +2)|>0$ for small enough 
$\tau$. It means that there exists exactly one non-degenerate eigenvalue 
$\lambda(v)\in (v+2,\infty)$ if $0\leq v\leq 4$. If $v$ increases even more, then $G(2+v;v)$ decreases 
approaching zero. 
There will be a 
critical value $v_{c,1}>4$ such that $G(2+v_{c,1};v_{c,1})=0$ and $G(2+v;v)<0$ if $v> v_{c,1}$. The value of 
$v_{c,1}$ must be close to $E_0-2$ if $\tau$ is small. We conclude that 
if $v> v_{c,1}$ there is no discrete spectrum on $(v+2,\infty)$. 

Third, we need to investigate what happens in the interval $(2,-2+v)$ when $v>4$. We see that $G(2;v)>0$ 
for all $v>4$, and we have the inequality:
$$ G(-2+v;v)\leq G(x;v)\leq G(2;v),\quad 2\leq x\leq -2+v.$$
Thus we only need to investigate the sign of 
$$G(-2+v;v)=E_0+2-v+\tau^2 \{\zeta_1(-2)+\zeta_1(-2+v)\},\quad v>4.$$
Note that  $G(-2+v;v)$ strictly decreases with $v$. If $\tau$ is small 
enough, then $G(-2+v;v)>G(2+v;v)\geq 0$ for all $4\leq v\leq  v_{c,1}$, which means that there is no discrete spectrum  in the interval 
$(2,-2+v)$ if $4< v \leq v_{c,1}$. If $v>v_{c,1}$ then $G(-2+v;v)$ continues to decrease until it reaches zero and 
afterwards it becomes negative. This defines a second critical value $v_{c,2}\approx E_0+2$ such that 
$0=G(-2+v_{c,2};v_{c,2})>G(-2+v;v)$ 
for every $v>v_{c,2}$. This generates a non-degenerate eigenvalue $\lambda(v)$ in the interval $(2,-2+v)$. 

To summarize: if $v\in [0, v_{c,1})$ there exists a unique non-degenerate eigenvalue in the interval $(v+2,\infty)$. If 
$v_{c,1}<v<v_{c,2}$ there are no discrete eigenvalues, and if $v_{c,2}<v$ then we again have a unique 
non-degenerate eigenvalue in the interval $(2,-2+v)$. 

There are no embedded eigenvalues in the set $(-2,2)\cup (-2+v,2+v)$, for all $v\geq 0$; the 
explanation is that the imaginary part of $G(x+i0_+;v)$ is not zero if $x\in (-2,2)\cup (-2+v,2+v)$, see \eqref{feb1} 
and \eqref{matqa2}. The only remaining situation where eigenvalues could exist is at thresholds, i.e. 
when $v$ is either $v_{c,1}$ or $v_{c,2}$. But we will show in the next subsection that this is not the case. 

\subsection{Behavior near thresholds}\label{sect22}

Let us assume that $v_{c,1}-v>0$ is very small, which means that the eigenvalue 
$\lambda(v)\in (v+2,\infty)$ still exists but $0<\lambda(v) -(v+2)<<1$. From \eqref{feb1} and 
\eqref{matqa2} we obtain:
\begin{align}\label{feb14}
F(t,v)&:=G(t^2+v+2;v),\quad G(x;v)=F(\sqrt{x-v-2},v),\\
F(t,v)&=E_0-v-2-t^2+\tfrac{1}{2}\tau^2\bigl(2+t^2- t\sqrt{4+t^2}+t^2+v+2\nonumber \\
&\quad-\sqrt{(v+t^2)(v+4+t^2)}\bigr),\nonumber 
\end{align}
which admits a smooth extension near the point $(0,v_{c,1})$, 
with $F(0,v_{c,1})=0$ and $\partial_t F(0,v_{c,1})=-\tau^2\neq 0$. Then the implicit function theorem gives us a smooth 
map $t(v)$ defined in a neighborhood of $v_{c,1}$ where $F(t(v),v)=0$. Since  
$\partial_vF(0,v_{c,1})=-1+\mathcal{O}(\tau^2)\neq 0$ if $\tau$ is
small enough, we have $t(v)\sim v_{c,1}-v>0$. It follows that 
\begin{equation}\label{mars1}
\lambda(v)=t^2(v)+v+2,
\end{equation}
and $\lambda(v)-(v+2)\sim (v_{c,1}-v)^2$ near the threshold. 

Moreover, since $G(x;v)=F(\sqrt{x-v-2},v)$ we have: 
\begin{align}\label{feb10}
\partial_xG(x;v)\sim
-\frac{\tau^2}{2\sqrt{x-(v+2)}},\quad 0<x-(v+2)<<1, 
\end{align}
and this estimate holds for $v$ in a small neighborhood of
$v_{c,1}$. In particular, for $v=v_{c,1}$ and using
that $G(v_{c,1}+2;v_{c,1})=0$ we have after integration:
\begin{align}\label{feb11}
G(x;v_{c,1})\sim
-\tau^2\sqrt{x-(v_{c,1}+2)},\quad 0<x-(v_{c,1}+2)<<1. 
\end{align}
\begin{proposition}\label{propfeb1}
If $v$ is either $v_{c,1}$ or $v_{c,2}$, then neither
$v_{c,1}+2$ nor $v_{c,2}-2$ are eigenvalues of $h$.
\end{proposition}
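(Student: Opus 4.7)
The plan is to argue by contradiction, treating the two thresholds in parallel. Suppose $\psi\in\mathcal{H}$ is a nonzero eigenfunction of $h(v_{c,1})$ at energy $\lambda:=v_{c,1}+2$ and decompose $\psi=\psi_S|S\rangle+\psi^{(-)}+\psi^{(+)}$ with $\psi^{(\pm)}\in l^2(\mathbf{N}_\pm)$. Projecting $h(v_{c,1})\psi=\lambda\psi$ onto the three blocks of $\mathcal{H}_S\oplus\mathcal{H}_-\oplus\mathcal{H}_+$ gives
\begin{align*}
(E_0-\lambda)\psi_S+\tau\psi^{(-)}_0+\tau\psi^{(+)}_0&=0,\\
(h_+-\lambda)\psi^{(+)}&=-\tau\psi_S|0_+\rangle,\\
(h_--2)\psi^{(-)}&=-\tau\psi_S|0_-\rangle,
\end{align*}
where in the last line I used $\lambda-v_{c,1}=2$. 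Since $\lambda>2=\max\sigma(h_+)$, the operator $h_+-\lambda$ is boundedly invertible on $l^2(\mathbf{N}_+)$ and $\psi^{(+)}=-\tau\psi_S\,r_+(\lambda)|0_+\rangle$ is an explicit exponentially decaying vector. The whole difficulty is thus concentrated in the left-lead equation, which places us \emph{exactly} at the upper threshold of $h_-$.

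I would then solve this three-term recurrence on $\mathbf{N}$ with Dirichlet condition $\psi^{(-)}_{-1}=0$ by elementary discrete-ODE methods. For $n\geq 1$ the equation is homogeneous and says that the first differences $\psi^{(-)}_{n+1}-\psi^{(-)}_n$ are constant; the $n=0$ equation then pins this constant to $\psi^{(-)}_0-\tau\psi_S$, so
\[
\psi^{(-)}_n=\psi^{(-)}_0+n\bigl(\psi^{(-)}_0-\tau\psi_S\bigr),\qquad n\geq 0.
\]
For this sequence to lie in $l^2(\mathbf{N}_-)$ the slope must vanish (otherwise linear growth) and then the constant must vanish (constants are not $l^2$), forcing $\psi_S=0$ and $\psi^{(-)}=0$. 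Feeding $\psi_S=0$ back into the right-lead equation gives $\psi^{(+)}=0$ as well, so $\psi\equiv 0$, contradicting the eigenfunction assumption.

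For the second threshold ($v=v_{c,2}$, $\lambda:=v_{c,2}-2$) the right-lead argument is unchanged: since $v_{c,2}\approx E_0+2$ we have $\lambda>2$, so again $\psi^{(+)}$ is exponentially decaying. The left-lead equation now reads $(h_-+2)\psi^{(-)}=-\tau\psi_S|0_-\rangle$ and puts us at the lower band edge of $h_-$. Applying the chiral substitution $\tilde\phi_n:=(-1)^n\psi^{(-)}_n$ conjugates $h_-+2$ into $h_--2$, preserves both the Dirichlet condition at $-1$ and the $l^2$ norm, and only changes the sign of the right-hand side, so the previous recurrence analysis applies verbatim and again forces $\psi_S=0$ and $\psi^{(-)}=0$.

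The only delicate point is keeping track of the Dirichlet boundary condition while unfolding the recurrence; the conceptual input is that at a band edge the kernel of $h_-\mp 2$ on formal sequences consists of at-worst linearly growing solutions, leaving no room in $l^2$ to absorb the $\delta_0$ forcing. This is the discrete counterpart of the well-known fact that $\pm 2$ is a threshold resonance rather than a bound state for the half-line Laplacian, and it is consistent with the square-root vanishing of $G(\cdot\,;v_{c,1})$ already recorded in \eqref{feb11}; no finer threshold expansion is needed for the statement at hand.
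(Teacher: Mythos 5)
Your proof is correct, and it is a genuinely different argument from the paper's. The paper works on the resolvent side: using the Feshbach formula \eqref{feb2} together with the square-root vanishing $G(x;v_{c,1})\sim -\tau^2\sqrt{x-v_{c,1}-2}$ established in \eqref{feb11}, it shows that $(x-v_{c,1}-2)\langle\psi|R(x)\psi\rangle\to 0$ for every basis vector $\psi$, which rules out a pole of $R$ at the threshold and hence a threshold eigenprojection. You instead analyze the eigenvalue equation directly: you decouple the right-lead block (which is strictly off-resonance since $v_{c,j}\pm 2>2$), reduce the left-lead block to the three-term recurrence $(h_-\mp 2)\psi^{(-)}=-\tau\psi_S|0_-\rangle$ with Dirichlet data, and observe that at a band edge the formal kernel of $h_-\mp 2$ consists only of constants and linearly growing sequences, none of which are square-summable, forcing $\psi_S=0$ and then $\psi\equiv 0$. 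The chiral gauge $\tilde\phi_n=(-1)^n\psi^{(-)}_n$ handles the lower edge cleanly. Your route is more elementary and self-contained — it does not rely on the resolvent computation or on the threshold asymptotics of $G$ from Section \ref{sect22} — while the paper's route is the natural one given that the resolvent machinery is already set up and the same $\sqrt{\cdot}$-behavior of $G$ is reused elsewhere (e.g.\ in Proposition \ref{febprop2}). One small expository remark: you derive $\psi_S=0$, $\psi^{(-)}=0$, $\psi^{(+)}=0$ from the two lead equations alone, and then the sample equation is automatically satisfied; it may be worth stating this explicitly so the reader sees the system is not underdetermined.
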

\begin{proof}
We only give the proof for $v_{c,1}$; the other case is similar. We
will prove that for every basis vector 
$\psi\in \{|m_\pm\rangle:\; m\geq
0\}\cup \{|S\rangle\}$ we have
\begin{equation}\label{feb111}
\lim_{x \searrow v_{c,1}+2}(x-v_{c,1}-2)\langle \psi|
R(x)\psi\rangle =0.
\end{equation}
Coupling this with the fact that we always have  
$$\lim_{x \searrow v_{c,1}+2}(x-v_{c,1}-2)\langle \psi| R(x)P_{\rm ac}\psi\rangle =0,$$
it shows that if there exists some orthogonal projection $P$ such
that $R(x)=R(x)P_{\rm ac}+\frac{1}{x-v_{c,1}-2}P$, then necessarily
$\langle \psi| P\psi\rangle=\|P\psi\|^2=0$ for all basis elements,
hence $P=0$. 
Let us prove \eqref{feb111} for $\psi=|m_-\rangle$. From \eqref{feb2} we have:
\begin{align}\label{feb12}
\langle m_-|&R(x)m_-\rangle\nonumber\\
&=\langle m_-|r_-(x-v_{c,1})m_-\rangle
+\frac{\tau^2}{G(x;v_{c,1})} |\langle m_-|r_-(x-v_{c,1})
  0_-\rangle|^2,
\end{align}
and from \eqref{matrellead}:
\begin{align}\label{feb13}
\langle m_-|R(x)m_-\rangle
&=-\zeta_1(x-v_{c,1})\frac{1-(\zeta_1^2(x-v_{c,1}))^{m+1}}{1-\zeta_1^2(x-v_{c,1})}\nonumber\\
&\quad+\frac{\tau^2}{G(x;v_{c,1})}|\zeta_1(x-v_{c,1})|^{2m+2}.
\end{align}
Since $\zeta_1^2(x-v_{c,1})$ goes to $1$ when $x$ converges to
$v_{c,1}+2$, it follow that the only singular behavior comes from
\eqref{feb11}, thus:
$$(x-v_{c,1}-2)\langle m_-|R(x)m_-\rangle\sim
\sqrt{x-v_{c,1}-2} \to 0.$$
The other cases are similar and we do not treat them.
\end{proof}

\begin{proposition}\label{febprop2}
The one dimensional projection $P_{\rm d}(v)$ associated with
$\lambda(v)$ is differentiable on $(0,v_{c,1})\cup
(v_{c,2},\infty)$, and for every compact $K\subset\mathbf{R}$ there exists a
constant $C_K>0$ such that we have the (optimal) bound:
$$\|P_{\rm d}'(v)\|\leq \frac{C_K}{(v-v_{c,2})(v_{c,1}-v)},\quad
v\in K\cap [(0,v_{c,1})\cup
(v_{c,2},\infty)].$$
\end{proposition}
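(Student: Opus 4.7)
The plan is to derive an explicit expression for the normalized eigenvector and differentiate it directly, rather than going through the Riesz projection. By computing the residue of the Feshbach formula \eqref{feshchc} at the simple pole $\lambda(v)$, one finds $P_{\rm d}(v)=|\psi(v)\rangle\langle\psi(v)|$ with
$$\psi(v)=\frac{1}{\sqrt{N(v)}}\Bigl\{|S\rangle+\tau\sum_{m\geq 0}\zeta_-(v)^{m+1}|m_-\rangle+\tau\sum_{m\geq 0}\zeta_+(v)^{m+1}|m_+\rangle\Bigr\},$$
where $\zeta_-(v):=\zeta_1(\lambda(v)-v)$, $\zeta_+(v):=\zeta_1(\lambda(v))$, and $N(v):=-\partial_z G(\lambda(v);v)$. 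Using $\zeta_1'(z)=-\zeta_1^2(z)/(1-\zeta_1^2(z))$ one checks that $N(v)=1+\tau^2\{\zeta_-^2/(1-\zeta_-^2)+\zeta_+^2/(1-\zeta_+^2)\}$, which confirms that $\psi(v)$ is unit-normalized.

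Smoothness of $\psi(v)$ on each of $(0,v_{c,1})$ and $(v_{c,2},\infty)$ follows from the implicit function theorem applied to $G(\lambda(v);v)=0$, using that $\partial_z G\neq 0$ by \eqref{feb3}. Since $|\zeta_\pm(v)|<1$ strictly on any compact sub-interval, the series can be differentiated term by term. Writing $\phi(v):=\sqrt{N(v)}\,\psi(v)$, one gets $\psi'(v)=\phi'(v)/\sqrt{N(v)}-(N'(v)/(2N(v)))\psi(v)$, and since $\|P_{\rm d}'(v)\|\leq 2\|\psi'(v)\|$ it suffices to control $\|\phi'(v)\|/\sqrt{N(v)}$ and $|N'(v)|/N(v)$.

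Both quantities are estimated using the summation identity $\sum_{m\geq 0}(m+1)^2\zeta^{2m}=(1+\zeta^2)/(1-\zeta^2)^3$. Near $v_{c,1}$ only $\zeta_-(v)$ approaches the critical value $1$: from \eqref{mars1} together with the local expansion $\zeta_1(2+s^2)=1-s+\mathcal{O}(s^2)$, one obtains $1-\zeta_-(v)^2\sim v_{c,1}-v$ and hence $N(v)\sim 1/(v_{c,1}-v)$ on a small neighborhood (with $\tau$-dependent constants). A key cancellation has to be exhibited: the singular factor $\zeta_1'(\lambda(v)-v)$ diverges like $1/(v_{c,1}-v)$, but it is multiplied by $\lambda'(v)-1=2t(v)t'(v)\sim v_{c,1}-v$ from \eqref{mars1}, so $\zeta_-'(v)$ stays bounded on compact subsets. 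Inserting these asymptotics gives $\|\phi'(v)\|^2/N(v)\sim 1/(v_{c,1}-v)^2$ and $|N'(v)|/N(v)\sim 1/(v_{c,1}-v)$, yielding $\|\psi'(v)\|\lesssim 1/(v_{c,1}-v)$. The analysis near $v_{c,2}$ is symmetric, with $\zeta_+(v)\to -1$ playing the role of $\zeta_-(v)\to 1$; combining both produces the stated product bound on any compact $K$.

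The main obstacle is that the standard Kato-type bound yields only $\|P_{\rm d}'(v)\|\lesssim 1/\mathrm{dist}(\lambda(v),\sigma(h(v))\setminus\{\lambda(v)\})\sim 1/(v_{c,1}-v)^2$, since the spectral gap squeezes as $\lambda(v)-(v+2)\sim(v_{c,1}-v)^2$. The improvement to the sharp $1/(v_{c,1}-v)$ rests on the geometric fact that $\Pi_-\psi(v)$ becomes nearly collinear with $\psi(v)$ as $v\to v_{c,1}$ (one checks $\|\Pi_-\psi(v)\|^2\to 1$), so the component of $(\partial_v h)\psi(v)=\Pi_-\psi(v)$ orthogonal to the eigenspace has norm of order $\sqrt{v_{c,1}-v}$, which partially compensates the blow-up of the reduced resolvent. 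Working directly with the explicit series bypasses this subtlety, and the matching orders of the two leading contributions to $\|\psi'(v)\|$ confirm optimality of the exponent.
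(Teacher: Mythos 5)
Your proposal is correct in substance and takes a genuinely different, somewhat cleaner route than the paper. The paper writes the Riesz projection $P_{\rm d}(v)$ as a contour integral of the resolvent \eqref{feb2}, which produces a sum of rank-one pieces; it then singles out the term $P_a(v)$ containing two copies of $r_-(\lambda(v)-v)$ as the most singular, and bounds everything using the norm asymptotics $\Vert r_\pm^k(x)|0_\pm\rangle\Vert\sim (x^2-4)^{-(k-3/4)}$, together with the crucial $t(v)$ prefactor coming from $(\partial_x G)(\lambda;v)=(\partial_t F)(t(v),v)/(2t(v))$. You instead compute the rank-one residue directly as $|\psi(v)\rangle\langle\psi(v)|$ with an explicit series for $\psi(v)$, verify the normalization $N(v)=-\partial_z G(\lambda(v);v)$, and differentiate the series term by term. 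The two proofs exploit the same underlying cancellation --- your observation that $\zeta_-'(v)=\zeta_1'(\lambda(v)-v)\bigl(\lambda'(v)-1\bigr)$ stays bounded because $\lambda'(v)-1=2t(v)t'(v)$ exactly cancels the $1/t(v)$ blow-up of $\zeta_1'$, is the same mechanism as the $t(v)$ prefactor in the paper's \eqref{mars0}. Your approach buys an explicit, elementary computation (a geometric series rather than contour-integral residues) and makes the optimality discussion in your last paragraph particularly transparent; the paper's version has the advantage of not requiring one to re-derive the eigenvector and normalization.

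One small but genuine slip: your claim that ``the analysis near $v_{c,2}$ is symmetric, with $\zeta_+(v)\to -1$'' is wrong. For $v>v_{c,2}$ the eigenvalue $\lambda(v)\in(2,v-2)$ approaches the lower threshold $v-2$ of the biased band, so $\lambda(v)-v\to -2^-$ and it is again $\zeta_-(v)=\zeta_1(\lambda(v)-v)$ that tends to $-1$, while $\zeta_+(v)=\zeta_1(\lambda(v))$ stays bounded away from $\pm1$ (since $\lambda(v)\approx v_{c,2}-2\approx E_0$ is far from the edges $\pm 2$). This is consistent with the paper's remark that the singular object near $v_{c,2}$ is $r_-(-2-t^2(v))$, still a $-$-lead resolvent. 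The estimate you need goes through verbatim once $\zeta_-\to -1$ replaces $\zeta_+\to -1$, so this does not affect the conclusion, but the labelling should be corrected.
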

\begin{proof} We only concentrate on the case in which $0<v_{c,1}-v<<1$, that is
just before the eigenvalue $\lambda(v)\in (v+2,\infty)$ hits the first 
threshold and disappears. 
The eigenvalue $\lambda(v)=t^2(v)+v+2$ is smooth as a function of $v$
because $t(v)$ is, see the discussion preceding \eqref{mars1}. Using the Riesz formula for the projection
$P_{\rm d}(v)$ associated with $\lambda(v)$, we see from \eqref{feb2}
that it will consist of a finite sum of rank one operators, like for example 
$$
 P_a(v):=-\frac{\tau^2}{(\partial_xG)(\lambda(v);v)}r_-(2+t^2(v))|0_-\rangle
\langle 0_-|r_-(2+t^2(v)).
$$
The above operator turns out to be the most singular when $v$ lies
near the critical values. From \eqref{feb14} we see that
$(\partial_xG)(\lambda(v);v)=
\frac{(\partial_tF)(t(v),v)}{2t(v)}$ hence:
\begin{equation}\label{mars0}
P_a(v)=-\frac{2\tau^2 t(v)}{(\partial_tF)(t(v),v)}r_-(2+t^2(v))|0_-\rangle
\langle 0_-|r_-(2+t^2(v)).
\end{equation}
If $k\geq 1$ and $|x|>2$ we have:
\begin{align}\label{mars2}
\Vert r_\pm^k(x)|0_\pm\rangle\Vert&=\sqrt{\langle
  0_\pm|r_\pm^{2k}(x) 0_\pm\rangle}\nonumber \\
  &=\frac{1}{\sqrt{(2k-1)!}}\sqrt{ \partial_x^{2k-1}\langle
  0_\pm|r_\pm(x) 0_\pm\rangle}=\frac{\sqrt{ \zeta_1^{(2k-1)}(x)}}{\sqrt{(2k-1)!}}.
\end{align}
From \eqref{matqa2} we see that 
$$\zeta_1^{(2k-1)}(x)\sim \frac{1}{(x^2-4)^{2k-3/2}},\quad
0<|x|-2<<1,\quad k\geq 1,$$
which implies:
\begin{align}\label{mars3}
 \Vert r_\pm^k(x)|0_\pm\rangle \Vert\sim \frac{1}{(x^2-4)^{k-3/4}},\quad
0<|x|-2<<1,\quad k\geq 1.
\end{align}
In particular:
\begin{align}\label{mars4}
\Vert r_-^k(2+t^2(v))|0_\pm\rangle\Vert&\sim
\frac{1}{t(v)^{2k-3/2}}\nonumber\\
&\sim \frac{1}{(v_{c,1}-v)^{2k-3/2}},\;
0<v_{c,1}-v\ll 1,\; k\geq 1.
\end{align}
If $k=1$ we obtain that $  \Vert
  r_-(2+t^2(v))|0_\pm\rangle \Vert^2\sim 1/t(v)$, which shows
that $P_a(v)$ is bounded near the threshold. Now by differentiating
\eqref{mars0} with respect to $v$, and keeping in mind that $t'(v)$ is
bounded, we see that the singular behavior is given by:
\begin{align*}\max\bigl \{ t(v)^2    \Vert
  r_-^2(2+t^2(v))|0_\pm\rangle \Vert   \Vert
  r_-(2+t^2(v))|0_\pm\rangle \Vert,\\  \Vert
  r_-(2+t^2(v))|0_\pm\rangle \Vert^2 \bigr\}\sim
\frac{1}{t(v)},
\end{align*}
which is of the type claimed by the proposition. 

Now we can analyze all the other terms given by the Riesz formula, and
notice that they contain at most one resolvent $r_-(2+t^2(v))$, which
is the only object which produces singularities. Since 
$P_a(v)$ contains two such resolvents, it is the most singular
object. Clearly, for $v>v_{c,2}$ the singular object would be
$r_-(-2-t^2(v))$ and $P_a(v)$ is the term containing two such resolvents. It
turns out that 
near both critical potentials $v_c$ we have: 
\begin{align}\label{mars5}
\|P_{\rm d}(v)-P_a(v)\|&=\mathcal{O}(\sqrt{|v-v_c|}),\\ \|P_{\rm d}'(v)-P_a'(v)\|&=\mathcal{O}(|v-v_c|^{-1/2}).
\end{align}
We now prove that $P_d(v)$ converges strongly to zero when $v\to
v_{c,1}$. First, if $m\geq 0$ is fixed, then from \eqref{mars0} and
\eqref{matrellead} we have: 
$$\langle m_-|P_a(v) m_-\rangle=-\frac{2\tau^2
  t(v)}{(\partial_tF)(t(v),v)}\zeta_1^{2m+2}(2+t(v)^2)\sim t(v). $$
 Second, using  $P_d(v)^2=P_d(v)$ and \eqref{mars5} we have:
  
    $$\Vert P_d(v)m_-\Vert ^2\leq \|P_{\rm d}(v)-P_a(v)\|+\langle m_-|P_a(v) m_-\rangle \sim \sqrt{v_{c,1}-v}. $$
    
  Up to an $\varepsilon/2$ argument one can now show that $P_d(v)$ converges strongly to zero. 
All this is consistent with the fact that there are no eigenvalues at
thresholds when $v=v_c$. The projections $P_{\rm d}(v)$ delocalise more
and more and converge strongly to zero.  Notice that this implies that if $P_d(v)\psi(v)=\psi(v)$ then $\psi(v)$ converges weakly to zero.
\end{proof}

\section{Proof of Theorem \ref{mainth}}\label{section3}

We only prove in detail \eqref{mars341}, the other identity follows after a similar argument.

\subsection{A general propagation estimate}
Let $s_{-1}<s_0\leq 0$ and consider a function
$\nu:(-\infty,s_0]\mapsto \mathbf{R}$ such that $\nu(s)=\nu(s_{-1})$ for $s\leq s_{-1}$, 
and $\nu$ is piecewise continuous, in particular continuous at $s_0$. The values $\nu(s_{-1})$ and $\nu(s_{0})$ are not 
one of the two critical values $v_{c,1}$ and $v_{c,2}$. Denote with $P_{{\rm ac, i}}$ (respectively $P_{{\rm d, i}}$) 
the projection on the absolutely continuous (respectively discrete) subspace of 
$h+\nu(s_{-1})\Pi_-$, and with $P_{{\rm ac, f}}$ (respectively $P_{{\rm d, f}}$) the projection on the absolutely continuous 
(respectively discrete) subspace of $h+\nu(s_{0})\Pi_-$.

Remember that $h_0=h_S+h_L$ and $h=h_0+h_T$. The subspace of absolute continuity of $h_0$ is 
$$P_{{\rm ac}}(h_L)=0\oplus \Pi_-\oplus \Pi_+.$$ 
Denote by 
 \begin{align}\label{mars12}
  \Omega_i :={\rm s}\lim_{t\to -\infty}e^{it (h+\nu(s_{-1})\Pi_-)}e^{-it (h_0+\nu(s_{-1})\Pi_-)}P_{{\rm ac}}(h_L)
 \end{align}
the 'incoming' wave operator between  $h_0+\nu(s_{-1})\Pi_-$ and
$h+\nu(s_{-1})\Pi_-$. This operator exists and is unitary 
between ${\Ran}(P_{{\rm ac}}(h_L))$ and ${\Ran}(P_{{\rm ac, f}})$ \cite{Yafaev,CJM1}. 
In a similar way, define:
\begin{align}\label{mars13}
  \Omega_f:={\rm s}\lim_{t\to -\infty}e^{it (h+\nu(s_{0})\Pi_-)}e^{-it (h_0+\nu(s_{0})\Pi_-)}P_{{\rm ac}}(h_L).
 \end{align}

If $t\leq w\leq s_0/\eta$ we denote by $U(t,w)$ the unitary solution to the evolution equation
\begin{align}\label{mars14}
iU'(t,w)&=h(\eta t)U(t,w),\quad h(s):= h+\nu(s)\Pi_-,\\
U(w,w)&=1,\quad
-\infty <t\leq w.\nonumber
\end{align}

\begin{proposition}\label{propestim}
Let $\Psi\in \mathcal{H}$ be a fixed unit vector. Using the notation $\Delta\nu:=\int_{s_{-1}}^{s_0} \nu(\tau)d\tau$. 
we have the following estimates:
\begin{equation}\label{mars15}
\lim_{\eta\searrow 0}
U\bigl(\frac{s_{-1}}{\eta}, \frac{s_{0}}{\eta}\bigr)^*
\Omega_i  e^{-i\frac{s_{-1}-s_0}{\eta}h_L } 
e^{\frac{i}{\eta}\Delta\nu  \Pi_-}P_{{\rm ac}}(h_L)\Psi
=\Omega_f P_{{\rm ac}}(h_L)\Psi,
\end{equation}
which is equivalent with:
\begin{align}\label{mars16}
 \lim_{\eta\searrow 0}\bigl\Vert 
U\bigl(\frac{s_{-1}}{\eta},&\frac{s_{0}}{\eta}\bigr)\Omega_fP_{{\rm ac}}(h_L)\Psi\nonumber\\
&-\Omega_i
e^{-i\frac{(s_{-1}-s_0)}{\eta}h_L} e^{i\frac{\Delta
    \nu}{\eta}\Pi_-}P_{{\rm ac}}(h_L)\Psi\bigr\Vert=0
\end{align}
and 
\begin{align}\label{mars16'}
 \lim_{\eta\searrow 0}\bigl\Vert 
U\bigl(\frac{s_{-1}}{\eta}, &\frac{s_{0}}{\eta}\bigr)P_{{\rm ac, f}}\Psi
\nonumber\\
&
-\Omega_i
e^{-i\frac{(s_{-1}-s_0)}{\eta}h_L} e^{i\frac{\Delta
    \nu}{\eta}\Pi_-}P_{{\rm ac}}(h_L)\Omega_f^*P_{{\rm ac, f}}\Psi\bigl \Vert=0.
\end{align}
\end{proposition}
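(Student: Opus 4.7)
My plan is to prove \eqref{mars16}; identities \eqref{mars15} and \eqref{mars16'} then follow by unitarity of $U(s_{-1}/\eta,s_{0}/\eta)$ together with $\Omega_i^*\Omega_i=P_{{\rm ac}}(h_L)=\Omega_f^*\Omega_f$ and $\Omega_f\Omega_f^*=P_{{\rm ac,f}}$: substituting $\Psi=\Omega_f^*P_{{\rm ac,f}}\tilde\Psi$ in \eqref{mars16} yields \eqref{mars16'}, and applying $U(s_{-1}/\eta,s_{0}/\eta)^*$ to \eqref{mars16} yields \eqref{mars15}. The strategy for \eqref{mars16} is to build a time-dependent ``instantaneous-wave-operator ansatz'' that matches the left-hand side at $t=s_0/\eta$ and matches the right-hand side at $t=s_{-1}/\eta$, and then to control the error via Duhamel. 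For $s\in[s_{-1},s_0]$ let $\Omega(s)$ be the wave operator intertwining $h_L+\nu(s)\Pi_-$ and $h+\nu(s)\Pi_-$ on $P_{{\rm ac}}(h_L)\mathcal{H}$, so that $\Omega(s_{-1})=\Omega_i$ and $\Omega(s_0)=\Omega_f$, let $U_L(t,w):=e^{-i(t-w)h_L}e^{-i\Pi_-\int_w^t\nu(\eta r)\,dr}$ be the free lead propagator (well defined because $[h_L,\Pi_-]=0$), and define
\begin{equation*}
\tilde\Xi(t):=\Omega(\eta t)\,U_L\bigl(t,s_{0}/\eta\bigr)\,P_{{\rm ac}}(h_L)\Psi .
\end{equation*}
A change of variable $\tau=\eta r$ in the phase gives $U_L(s_{-1}/\eta,s_{0}/\eta)=e^{-i(s_{-1}-s_{0})h_L/\eta}e^{i\Delta\nu\,\Pi_-/\eta}$, so $\tilde\Xi(s_{-1}/\eta)$ is exactly the right-hand side of \eqref{mars16}, while $\tilde\Xi(s_0/\eta)=\Omega_f P_{{\rm ac}}(h_L)\Psi$.

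Using the intertwining relation $\Omega(s)(h_L+\nu(s)\Pi_-)P_{{\rm ac}}(h_L)=h(s)\Omega(s)P_{{\rm ac}}(h_L)$ (which holds because $h_SP_{{\rm ac}}(h_L)=0$, since $P_{{\rm ac}}(h_L)=\Pi_-+\Pi_+$ annihilates $|S\rangle$), differentiation yields
\begin{equation*}
i\tilde\Xi'(t)=h(\eta t)\tilde\Xi(t)+i\eta\,\Omega'(\eta t)\,U_L(t,s_{0}/\eta)P_{{\rm ac}}(h_L)\Psi .
\end{equation*}
The difference $R(t):=\tilde\Xi(t)-U(t,s_{0}/\eta)\Omega_f P_{{\rm ac}}(h_L)\Psi$ vanishes at $t=s_0/\eta$ and satisfies $iR'=h(\eta t)R$ plus the above inhomogeneous term, so Duhamel's formula together with the substitution $\sigma=\eta w$ gives the explicit error formula
\begin{equation*}
R(s_{-1}/\eta)=-\int_{s_{-1}}^{s_{0}}U\bigl(s_{-1}/\eta,\sigma/\eta\bigr)\,\Omega'(\sigma)\,U_L\bigl(\sigma/\eta,s_{0}/\eta\bigr)P_{{\rm ac}}(h_L)\Psi\,d\sigma .
\end{equation*}
Thus \eqref{mars16} reduces to proving that the above integral tends to zero in norm as $\eta\searrow 0$.

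This last step is the main obstacle, and is really an ``adiabatic theorem without spectral gap'' for the continuous spectrum. A naive norm bound fails since $\Omega'(\sigma)$ is not uniformly bounded near the critical values of $\nu$, exactly as reflected in Proposition~\ref{febprop2}. The crucial feature is that $U_L(\sigma/\eta,s_{0}/\eta)P_{{\rm ac}}(h_L)\Psi$ carries the rapidly oscillating factor $e^{-i(\sigma-s_{0})h_L/\eta}$ (up to the scalar phase $e^{-i\Pi_-\int_{s_0}^{\sigma}\nu(\tau)d\tau/\eta}$), and $h_L$ is purely absolutely continuous on $\Ran P_{{\rm ac}}(h_L)$. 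I would first reduce, via strong continuity of the propagators in the driving potential, to $\nu\in C^1$ with values in a compact subset of $[0,\infty)\setminus\{v_{c,1},v_{c,2}\}$; then a stationary scattering representation expresses $\Omega'(\sigma)P_{{\rm ac}}(h_L)$ through derivatives in $\sigma$ of the boundary values $(h(\sigma)-\lambda\mp i0)^{-1}$, which are norm continuous in $\sigma$ on this compact set. A Riemann--Lebesgue type argument in the spectral representation of $h_L$, carried out first for a dense class of spectrally regular $\Psi$ and then extended by a density argument using $\|U\|=\|U_L\|=1$, shows that the fast oscillations defeat the slow dependence of $\Omega'(\sigma)$ and drive the integral to zero. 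The piecewise-continuous case is recovered by splitting $[s_{-1},s_{0}]$ along the finitely many discontinuities of $\nu$, applying the above on each subinterval, and concatenating the estimates via strong continuity of $U$.
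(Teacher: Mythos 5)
Your Duhamel set‐up is clean and the reductions between \eqref{mars15}, \eqref{mars16} and \eqref{mars16'} are correct, but the decomposition you chose is genuinely different from the paper's and, as written, it leaves the actual analytic content of the proposition unproved. You interpolate with the ansatz $\tilde\Xi(t)=\Omega(\eta t)U_L(t,s_0/\eta)P_{\rm ac}(h_L)\Psi$, which forces you to differentiate the instantaneous wave operator and produces the error integral
\[
\int_{s_{-1}}^{s_0}U\bigl(s_{-1}/\eta,\sigma/\eta\bigr)\,\Omega'(\sigma)\,U_L\bigl(\sigma/\eta,s_0/\eta\bigr)P_{\rm ac}(h_L)\Psi\,d\sigma .
\]
Your claim that this vanishes by a Riemann--Lebesgue mechanism is precisely the hard part and the sketch does not deliver it. The obstruction is not only the possible unboundedness of $\Omega'(\sigma)$; even on an interval where $\nu$ stays well away from $v_{c,1},v_{c,2}$, so that $\Omega'$ is bounded, the factor $U(s_{-1}/\eta,\sigma/\eta)$ is itself rapidly oscillating in $\sigma$ at scale $1/\eta$ and is not an explicit exponential, so it can compensate the oscillations of $U_L(\sigma/\eta,s_0/\eta)$ on the diagonal in energy. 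A bare stationary‐phase/Riemann--Lebesgue argument in the spectral representation of $h_L$ does not see this near‐diagonal contribution; one would have to exploit the specific intertwining structure of $\Omega'(\sigma)$ (this is exactly the delicate ``adiabatic theorem for the continuous spectrum'' issue discussed in the cited works \cite{AEGS,NT,MN,D}), and the reduction to $\nu\in C^1$ with values away from the critical set neither simplifies this nor is justified uniformly in $\eta$.

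The paper avoids $\Omega'(\sigma)$ altogether. It compares the full propagator not with the instantaneous wave operator but with the \emph{free lead} propagator, setting $W(t)=U(t,s_0/\eta)^*U_0(t,s_0/\eta)P_{\rm ac}(h_L)$, and computes $\lim_{t\to-\infty}W(t)$ on a dense set of spectrally nice vectors in two ways: once directly from the group property and the constancy of $\nu$ for $s\le s_{-1}$, giving the left side of \eqref{mars15}; once via a Dyson expansion whose integrand is $U(\tau+s_0/\eta,s_0/\eta)^*\,h_T\,U_0(\tau+s_0/\eta,s_0/\eta)P_{\rm ac}(h_L)F$. Because $h_T$ is bounded, finite rank and localized near the sample, the propagation estimate \eqref{mars18} gives an $\eta$‐independent $L^1$ bound $\langle\tau\rangle^{-2}$, while the norm estimate \eqref{mars22} freezes $U$ to $e^{-i\tau h(\nu(s_0))}$ pointwise in $\tau$ as $\eta\searrow0$. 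Dominated convergence then identifies the limit with the Dyson integral for $\Omega_f$, with no reference to $\Omega(s)$ or its derivative at intermediate times. That is the idea you need to close the gap; without it your proposal reduces the proposition to an unproved (and, in this form, doubtful) oscillatory‐integral lemma.
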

\begin{proof}

We start by proving \eqref{mars15}.  Denote by $U_0(t,w)$ the explicit unitary solution of the equation 
\begin{align}\label{mars17}
iU_0'(t,w)&=(h_L+\nu(\eta t)\Pi_-)U_0(t,w),\quad U_0(w,w)=1,\quad
-\infty <t\leq w\nonumber \\
U_0(t,w)&=e^{-i(t-w)h_L}e^{-\frac{i}{\eta}\int_{\eta w}^{\eta t} \nu(s)ds \Pi_-}.
\end{align}
Note that $U_0(t,w)$ commutes with $P_{{\rm ac}}(h_L)$, and equals the identity operator when restricted to the subspace of the sample. 

Choose $f\in {\Ran} P_{{\rm ac}}(h_L)$ such that if $\Phi_E^{\pm}$ is a generalized eigenfunction of $h_L$, 
then the function $\langle \Phi_E^{\pm},f\rangle $ is a smooth function of $E$ with 
a compact support not containing $\pm 2$. First, $\langle S,U_0(t,w)f\rangle =0$. Second, from \eqref{mars17} and by partial integration 
with respect to $E$ it follows that there exists $C_{f,N}>0$ independent of 
$\eta$ such that 
\begin{align}\label{mars18}
\max_{0\leq j\leq N} \{ |\langle j_\pm,U_0(t,w)f\rangle |\}
\leq C_{f,N}\langle t-w\rangle^{-2},\quad \forall t\leq w.
\end{align}
Define
\begin{equation}\label{marsaa}
W(t):=U\bigl(t, \frac{s_{0}}{\eta}\bigr)^*U_0\bigl (t, \frac{s_{0}}{\eta}\bigr )P_{{\rm ac}}(h_L),\qquad t\leq \frac{s_{0}}{\eta}.
\end{equation}

We will compute the strong limit of $W(t)$ when $t\to -\infty$ in two
different ways. Up to an $\varepsilon/2$ argument it is enough to prove
the existence of a strong limit for a function $F$ such that 
$\langle \Phi_E^{\pm},F\rangle $ is a smooth function of $E$ with 
a compact support not containing $\pm 2$.

On one hand, if $t<s_{-1}/\eta$ we have:
\begin{align}\label{mars19}
W(t)F&=U\bigl(t, \frac{s_{0}}{\eta}\bigr)^*U_0\bigl(t, \frac{s_{0}}{\eta}\bigr)P_{{\rm ac}}(h_L)F\nonumber \\
&=
U\bigl(\frac{s_{-1}}{\eta}, \frac{s_{0}}{\eta}\bigr)^*e^{i(t-s_{-1}/\eta)h(\nu(s_{-1}))}
\nonumber\\
&\qquad\cdot e^{-i(t-s_0/\eta)h_L}P_{{\rm ac}}(h_L)e^{-\frac{i}{\eta}\int_{s_0}^{\eta t} \nu(s)ds \Pi_-}F\nonumber\\
&=
U\bigl(\frac{s_{-1}}{\eta}, \frac{s_{0}}{\eta}\bigr)^*e^{i(t-s_{-1}/\eta)h(\nu(s_{-1}))}
e^{-i(t-s_{-1}/\eta)(h_0+\nu(s_{-1})\Pi_-)}P_{{\rm ac}}(h_L)\nonumber\\
&\qquad\cdot e^{-i\frac{s_{-1}-s_0}{\eta}h_L } e^{\frac{i}{\eta}\Delta\nu \Pi_-}F,
\end{align}
where in the second line we used the group property of $U$ and the
fact that $\nu$ is constant at the left of $s_{-1}$. The variable $t$ only appears in the middle, and using 
\eqref{mars13} we obtain:
\begin{align}\label{mars20}
\lim_{t\to -\infty}W(t)F&=
U\bigl(\frac{s_{-1}}{\eta}, \frac{s_{0}}{\eta}\bigr)^*
\Omega_i  e^{-i\frac{s_{-1}-s_0}{\eta}h_L } e^{\frac{i}{\eta}\Delta\nu  \Pi_-}P_{{\rm ac}}(h_L)F.
\end{align}
On the other hand, by differentiating with respect to $t$ the formula
\eqref{marsaa} defining $W(t)F$ and then integrating back 
we obtain:
\begin{align}\label{mars21''}
W(t)F&=P_{{\rm ac}}(h_L)F+i\int_{s_0/\eta}^{t}d\tau 
U\bigl(\tau, \frac{s_{0}}{\eta}\bigr)^*h_TU_0\bigl(\tau, \frac{s_{0}}{\eta}\bigr)
P_{{\rm ac}}(h_L)F\nonumber
 \\
&=P_{{\rm ac}}(h_L)F\nonumber\\
&\quad+i\int_{0}^{t-\frac{s_{0}}{\eta}}d\tau
U\bigl(\tau+\frac{s_{0}}{\eta}, \frac{s_{0}}{\eta}\bigr)^* h_T U_0
\bigl(\tau+\frac{s_{0}}{\eta}, \frac{s_{0}}{\eta}\bigr)P_{{\rm ac}}(h_L)F, 
\end{align}
or using \eqref{mars20}:
\begin{align}\label{mars21}
\lim_{t\to -\infty}&W(t)F=
U\bigl(\frac{s_{-1}}{\eta}\, \frac{s_{0}}{\eta}\bigr)^*
\Omega_i e^{-i\frac{s_{-1}-s_0}{\eta}h_L } e^{\frac{i}{\eta}\Delta\nu \Pi_-}F
= P_{{\rm ac}}(h_L)F\nonumber\\
&+i\int_{0}^{-\infty}d\tau 
U\bigl(\tau+\frac{s_{0}}{\eta}, \frac{s_{0}}{\eta}\bigr)^*h_TU_0
\bigl(\tau+\frac{s_{0}}{\eta}, \frac{s_{0}}{\eta}\bigr)P_{{\rm ac}}(h_L)F.
\end{align}
For fixed $\tau$, we have:
\begin{align*}
U\bigl(\tau&+\frac{s_{0}}{\eta}, \frac{s_{0}}{\eta}\bigr)^*
e^{-i \tau h(\nu(s_0))}\\
&=
1
+i\int_{0}^\tau dt'U\bigl(t'+\frac{s_{0}}{\eta}, \frac{s_{0}}{\eta}\bigr)^*(\nu (\eta t'+s_0)-\nu(s_0))
e^{-i t' h(\nu(s_0))}, 
\end{align*}
which leads to the operator norm estimate:
  \begin{align}\label{mars22'}
 \bigl\Vert U\bigl(\tau+\frac{s_{0}}{\eta}, \frac{s_{0}}{\eta}\bigr)^*
e^{-i \tau h(\nu(s_0))}-
1\bigr\Vert \leq \tau \sup_{\eta \tau+s_0\leq x\leq s_0} |\nu (x)-\nu(s_0)|,
\end{align}
or 
\begin{align}\label{mars22}
 \bigl \Vert U\bigl(\tau+\frac{s_{0}}{\eta}, \frac{s_{0}}{\eta}\bigr)-
e^{-i \tau h(\nu(s_0))}\bigr \Vert \leq \tau \sup_{\eta \tau+s_0\leq x\leq s_0} |\nu (x)-\nu(s_0)|,
\end{align}
where the right hand side goes to zero with $\eta$. 
The same type of norm estimate as \eqref{mars22} holds when we replace $U$ with $U_0$ and $h$ with $h_L$. Being a norm estimate, it also holds for 
the adjoints. 

Using \eqref{mars18}, we obtain the norm estimate: 
\begin{align}\label{mars21'}
\bigl \Vert h_TU_0
\bigl (\tau+\frac{s_{0}}{\eta}, \frac{s_{0}}{\eta}\bigr )
P_{{\rm ac}}(h_L)F\bigr \Vert \leq C\langle \tau\rangle ^{-2}. 
\end{align}
Now using the left continuity of $\nu$ at $s_0$, the $L^1$ upper bound in \eqref{mars21'} and the pointwise norm convergence 
of \eqref{mars22}, the Lebesgue dominated convergence theorem gives us
\eqref{mars15}:
\begin{align*}
\lim_{\eta\searrow 0}&
U\bigl(\frac{s_{-1}}{\eta}, \frac{s_{0}}{\eta}\bigr)^*
\Omega_i  e^{-i\frac{s_{-1}-s_0}{\eta}h_L } e^{\frac{i}{\eta}\Delta\nu  \Pi_-}P_{{\rm ac}}(h_L)F\\
&= P_{{\rm ac}}(h_L)F\\
&\quad+i\int_{0}^{-\infty}d\tau  e^{i \tau h(\nu(s_0))}
h_T   e^{-i \tau (h_L+\nu(s_0)\Pi_-)} P_{{\rm ac}}(h_L)F\\
&=\Omega_f P_{{\rm ac}}(h_L)F,
\end{align*}
where the last equality comes from the Dyson equation satisfied by
$\Omega_f$. Then since $U (\frac{s_{-1}}{\eta},
  \frac{s_{0}}{\eta} )$ is unitary, we have:
$$
U\bigl (\frac{s_{-1}}{\eta}, \frac{s_{0}}{\eta}\bigr )\Omega_f P_{{\rm ac}}(h_L)F-
\Omega_i  e^{-i\frac{s_{-1}-s_0}{\eta}h_L }  e^{\frac{i}{\eta}\Delta\nu \Pi_-}P_{{\rm ac}}(h_L)F=o(1)
$$
which is exactly \eqref{mars16}. Finally, if we use in \eqref{mars16} a vector $F=\Omega_f^*P_{{\rm ac, f}}\Psi$ where 
$\Psi$ is some arbitrary unit vector, then $\Omega_f P_{{\rm ac}}(h_L)F=P_{{\rm ac, f}}\Psi$ and \eqref{mars16'} follows.

\end{proof}

\subsection{Proof of \eqref{mars341}}

\begin{proof} Note that it is enough to consider rank one observables, thus we will assume 
without loss of generality that $A=|\Psi\rangle\langle \Psi|$. 
Let us remember the simplifying notation:
\begin{align}\label{mars11}
 P_{{\rm ac}}(0)&:=P_{{\rm ac}}(h(v(0))), 
&P_{{\rm ac}}(-1)&:=P_{{\rm ac}}(h(v(-1)))=P_{{\rm ac}}(h),\nonumber \\
P_{{\rm d}}(0)&:=P_{{\rm d}}(h(v(0))), 
&P_{{\rm d}}(-1)&:=P_{{\rm d}}(h(v(-1)))=P_{{\rm d}}(h).
\end{align}
We can write:
\begin{align}\label{mars30}
 U(-\eta^{-1},0)^*&f_{\rm eq}(h)U(-\eta^{-1},0)A\nonumber\\
 &=
U(-\eta^{-1},0)^*f_{\rm eq}(h)P_{{\rm d}}(-1)U(-\eta^{-1},0)P_{{\rm d}}(0)A\nonumber \\
&\quad+U(-\eta^{-1},0)^*f_{\rm eq}(h)P_{{\rm ac}}(-1)U(-\eta^{-1},0)P_{{\rm ac}}(0)A
 \nonumber \\
&\quad+
U(-\eta^{-1},0)^*f_{\rm eq}(h)P_{{\rm d}}(-1)U(-\eta^{-1},0)P_{{\rm ac}}(0)A\nonumber \\
&\quad+
U(-\eta^{-1},0)^*f_{\rm eq}(h)P_{{\rm ac}}(-1)U(-\eta^{-1},0)P_{{\rm d}}(0)A
.
\end{align}
We will treat these four terms separately.

\subsubsection{The discrete-discrete term}
From now on the notation $T_1\sim T_2$ will mean that the difference $T_1- T_2$
goes to zero with $\eta$ in the appropriate topology. 
Under our assumptions on $v$, the discrete eigenvalues of the instantaneous Hamiltonian 
$h(s)=h(v(s))$ remain at a positive distance from the continuous spectrum
if $-1\leq s< s_{c}$ and $s'_{c}< s\leq 0$. Moreover, we assumed that
$v''(s)$ is continuous and bounded on the open intervals. Then using the usual adiabatic theorem
for eigenstates, the continuity and the group properties of $U$, we obtain that in the operator 
norm topology we have: 
\begin{align}\label{mars31}
 P_{{\rm d}}(-1)U(-1/\eta,s_{c}/\eta)&\sim U(-1/\eta,s_{c}/\eta)P_{\rm d}(s_{c,1}-0), \nonumber
\\ U(s'_{c}/\eta,0)P_{{\rm d}}(0)&\sim P_{\rm d}(s'_{c}+0)U(s'_{c}/\eta,0),\nonumber \\
P_{{\rm d}}(-1)U(-\eta^{-1},0)P_{{\rm d}}(0)
&\sim \nonumber\\
U(-1/\eta,s_{c}/\eta)P_{\rm d}(s_{c}&-0) U(s_{c}/\eta, s'_{c}/\eta)
P_{\rm d}(s'_{c}+0)U(s'_{c}/\eta,0).
\end{align}
We see that in the middle of the last line we get the rank one operator $$P_{\rm
  d}(s_{c}-0) U(s_{c}/\eta, s'_{c}/\eta)
P_{\rm d}(s'_{c}+0).$$
We will now show that 
$$\lim_{\eta\searrow 0} \langle \psi(s_c-0)\,|\, U(s_{c}/\eta, s'_{c}/\eta)\psi(s'_c+0)\rangle=0.$$
In order to do this, let us go back to Proposition \ref{propestim} and identify $s_{-1}=s_{c}$,  $s_0=s'_{c}$ and $\nu$ with $v$ 
restricted to $[s_{c},s'_{c}]$. We see that the 
potential $v$ is continuous from the left at $s'_{c}$, and moreover, the spectrum of the instantaneous Hamiltonian 
$h(s)$ is purely absolutely continuous if $s\in [s_{c},s'_{c}]$. 
Remember that the wave operators $\Omega(s'_c-0)$ and $\Omega(s_c+0)$ defined in 
\eqref{mf2} map ${\Ran}(P_{{\rm ac}}(h_L))$ onto
$\mathcal{H}$. Comparing with \eqref{mars16'}, we can 
identify $\Omega_f$ with $\Omega(s'_c-0)$ and $\Omega_i$ with $\Omega(s_c+0)$, 
while ${\Ran}P_{{\rm ac, f}}=\mathcal{H}$. Thus:
\begin{align*}
\langle \psi(s_c-0) \,|&\,U(s_{c}/\eta, s'_{c}/\eta)\psi(s'_c+0)\rangle\nonumber\\ \sim 
\langle \Omega(s_c+0)^*\psi(s_c-0)\,|&\,
e^{-i\frac{s_{c}-s'_{c}}{\eta}h_L}e^{\frac{i\Delta v}{\eta}\Pi_-}\Omega(s'_c-0)^*\psi(s'_c+0)\rangle.
\end{align*}
The $\eta$ dependence on the right hand side is explicit, and 
this term converges to zero with $\eta$. It means that there is no 
contribution to the final steady state from the purely discrete part.

\subsubsection{The continuous-continuous term}
The next term in \eqref{mars30} contains both the initial and final projections on the instantaneous 
absolutely continuous subspaces. Note the important thing 
that $P_{\rm ac}(-1)=P_{\rm ac}(h)$ commutes with $f_{\rm eq}(h)$. 

We can again apply Proposition \ref{propestim}, where now $s_{-1}=-1$, $s_0=0$, $\Omega_i=\Omega(-1)$,  
$\Omega_f=\Omega(0)$ and $P_{\rm ac, f}=P_{\rm ac}(0)$. Using \eqref{mars16'} we get: 
\begin{align}\label{mars36}
U(-\eta^{-1},0)^*&f_{\rm eq}(h)P_{\rm ac}(-1)U(-\eta^{-1},0)P_{\rm ac}(0)\Psi\nonumber \\
&\sim U(-\eta^{-1},0)^*f_{\rm eq}(h)P_{\rm ac}(-1)\Omega(-1)
e^{\frac{i}{\eta}h_L}\nonumber\\
&\qquad\cdot e^{i\frac{\Delta
    \nu}{\eta}\Pi_-}P_{{\rm ac}}(h_L)\Omega(0)^*P_{\rm ac}(0)\Psi.
\end{align}
The important thing is that $\Omega(-1)$ intertwines between $h$ and $h_L$. Thus  the right hand side of the above 
equation is equal to:
$$
U(-\eta^{-1},0)^*P_{\rm ac}(-1)\Omega(-1)
e^{\frac{i}{\eta}h_L} e^{i\frac{\Delta
    \nu}{\eta}\Pi_-}f_{\rm eq}(h_L)P_{{\rm ac}}(h_L)\Omega(0)^*P_{\rm ac}(0)\Psi.
    $$
But now we can apply \eqref{mars15} and obtain the result:
\begin{align}\label{mars37}
U(-\eta^{-1},0)^*&f_{\rm eq}(h)P_{\rm ac}(-1)U(-\eta^{-1},0)P_{\rm ac}(0)\Psi
\nonumber\\
&\sim \Omega(0) f_{\rm eq}(h_L)P_{{\rm ac}}(h_L)\Omega(0)^*P_{\rm ac}(0)\Psi.
\end{align}
Thus the first nonzero contribution to $\langle A\rangle$ in \eqref{mars10} is:
\begin{align}\label{mars37'}
 \langle \Omega(0)^*P_{\rm ac}(0)
  \Psi\, |\,
f_{\rm eq}(h_L)
P_{{\rm ac}}(h_L)\Omega(0)^*P_{\rm ac}(0)\Psi  \rangle .
\end{align}

\subsubsection{The mixed terms}
The third contribution in \eqref{mars30} is:
$$
U(-\eta^{-1},0)^*f_{\rm eq}(h)P_{\rm d}(-1)U(-\eta^{-1},0)P_{\rm ac}(0)\Psi.
$$
A direct application of \eqref{mars16'} with $s_{-1}=-1$ and $s_0=0$
shows that the vector $U(-\eta^{-1},0)P_{\rm ac}(0)\Psi$ lies 
almost completely in ${\Ran}P_{\rm ac}(-1)$, thus when
projected on ${\Ran}P_{\rm d}(-1)$ it will converge to
zero with $\eta$. Thus this term will also disappear.

The fourth contribution is probably the most interesting one. Using
the decomposition $1=P_{\rm ac}(0)+P_{\rm d}(0)$ to the left, we can write:
\begin{align}\label{mars38}
U(-\eta^{-1}&,0)^*f_{\rm eq}(h)P_{\rm ac}(-1)U(-\eta^{-1},0)P_{\rm d}(0)A \\
&=
P_{\rm ac}(0)U(-\eta^{-1},0)^*f_{\rm eq}(h)P_{\rm ac}(-1)U(-\eta^{-1},0)P_{\rm d}(0)A\nonumber \\
&\quad+P_{\rm d}(0)U(-\eta^{-1},0)^*f_{\rm eq}(h)P_{\rm ac}(-1)U(-\eta^{-1},0)P_{\rm d}(0)A.\nonumber
\end{align}
We show that the trace of the first term will converge to zero. After taking the trace with $A=\vert \Psi\rangle \langle \Psi\vert $ we obtain:
\begin{align}\label{mars39}
\langle \Psi &\,|\, P_{\rm ac}(0)U(-\eta^{-1},0)^*f_{\rm eq}(h)P_{\rm ac}(-1)U(-\eta^{-1},0)P_{\rm d}(0)
\Psi\rangle \nonumber \\
&=
\langle U(-\eta^{-1},0)^*f_{\rm eq}(h)P_{\rm ac}(-1) U(-\eta^{-1},0) P_{\rm ac}(0)\Psi \,|\, P_{\rm d}(0)\Psi\rangle.
\end{align}
But the 'bra' vector in the second line is the same as the one in \eqref{mars37}, which
we know that asymptotically enters the range of $\Omega(0)$, thus
becomes orthogonal to the 'ket' vector.

Now let us treat the last contribution. Because  $P_{\rm d}(0)$ is one dimensional, the trace of this last contribution is:
\begin{align}\label{mars40}
\langle \psi(0)\vert& A\psi(0)\rangle  \nonumber\\
&\cdot
\Tr\{P_{\rm d}(0)U(-\eta^{-1},0)^*f_{\rm eq}(h)P_{\rm ac}(-1)U(-\eta^{-1},0)P_{\rm d}(0)\} .
\end{align}

Using the group property
$U(-\eta^{-1},0)=U(-\eta^{-1},s'_{c}/\eta)U(s'_{c}/\eta,0)$, then applying the 
adiabatic theorem which says that in operator norm:
\begin{align*}
U(s'_{c}/\eta,0)P_{\rm d}(0)&\sim P_{\rm
  d}(s'_{c}+0)U(s'_{c}/\eta,0),
  \\
   P_{\rm d}(0)U(s'_{c}/\eta,0)^*&\sim U(s'_{c}/\eta,0)^*
  P_{\rm d}(s'_{c}+0),
  \end{align*} 
  and finally the trace cyclicity, we obtain:
\begin{align}\label{mars42}
&\Tr\{P_{\rm d}(0)U(-\eta^{-1},0)^*f_{\rm eq}(h)P_{\rm ac}(-1)U(-\eta^{-1},0)P_{\rm d}(0)\} 
\nonumber \\
&\sim \langle \psi(s'_c+0)| U(-\eta^{-1},s'_{c}/\eta)^* f_{\rm eq}(h)P_{\rm ac}(-1)U(-\eta^{-1},s'_{c}/\eta)\psi(s'_c+0)\rangle.
\end{align}
Now we can reason as in the continuous-continuous case, where
$s_{-1}=-1$, $s_0=s_c'$, and $\Omega(s'_c-0)$ plays the
role of $\Omega_f$, which ends the proof of \eqref{mars341}. 

Regarding \eqref{mars3411}, the only
difference is that the discrete-discrete term will now contribute
because we can apply the usual adiabatic theorem on the whole
evolution interval, while the mixed terms disappear. The
continuous-continuous term is identical with the previous one. 
\end{proof}

\section{Proof of Proposition \ref{propodilo}}\label{section4}
First of all, the result stated in { (i)} is a trivial consequence of the theorem. We will only prove { (ii)}  in detail 
since the proof of { (iii)} is very much similar. 

Let us now focus on the case described in Figure \ref{f1}. The family $h(v)$ is norm continuous in $v$ and its spectrum is
Lipschitz continuous in $v$. The spectrum of $h(v_{c,1})$ is purely absolutely continuous
and consists of two well isolated bands: $[-2,2]\cup
[-2+v_{c,1},2+v_{c,1}]$. We can find a positively oriented circle $\Gamma$
which completely includes the interval $[-3+v_{c,1},3+v_{c,1}]$ while $[-2,2]$ lies
outside. Moreover, if $\delta_0>0$ is small enough and $|v-v_{c,1}|\leq \delta_0$ then the spectrum of $h(v)$ will be at a distance proportional with $\delta_0$ from the spectrum of $h(v_{c,1})$ and:
\begin{align}\label{mars42a}
\sup_{z\in \Gamma}\;\sup_{|v-v_{c,1}|\leq \delta_0}\|(h(v)-z)^{-1}\|<\infty.
\end{align}
The above bound also holds true if $h(v)$ is replaced with
$h_L(v)=h_L+v\Pi_-$. 

Using the second resolvent identity, we have that 
$$
\Pi_-\{(h(v)-z)^{-1}-(h_L(v)-z)^{-1}\}=-\Pi_-(h_L(v)-z)^{-1}h_T(h(v)-z)^{-1}, 
$$
for $z\in \Gamma$.
We now argue that if 
 $\Pi_-^{(M)}=\sum_{m\geq M}|m_-\rangle\langle m_-|$ is $\Pi_-$ without
its first $M$ sites, then:
\begin{equation}\label{mars43}
\lim_{M\to\infty}\{\sup_{z\in \Gamma}\;\sup_{|v-v_{c,1}|\leq \delta_0}\|\Pi_-^{(M)}(h_L(v)-z)^{-1}h_T(h(v)-z)^{-1}\|\}=0.
\end{equation}
This is due to the fact that $h_T$ is localized while, uniformly in $z\in\Gamma$,  the two resolvents 
have exponentially localized kernels near the diagonal.

By integrating $(h_L(v)-z)^{-1}$ over $\Gamma$ we obtain $\Pi_-$. Thus the Riesz 
projection $P(v)$ corresponding to the spectrum of $h(v)$ contained inside
$\Gamma$ obeys the estimate:
\begin{align}\label{mars44}
\lim_{M\to\infty}\{\sup_{|v-v_{c,1}|\leq \delta_0}\|\Pi_-^{(M)}\{P(v)-\Pi_-\}\|\}=0.
\end{align}
Now fix some $\varepsilon>0$. We can find $M=M_\varepsilon$ such that:
\begin{align}\label{mars44'}
\sup_{|v-v_{c,1}|\leq
  \delta_0}\|\Pi_-^{(M_\varepsilon)}P(v)-\Pi_-^{(M_\varepsilon)}\|\leq
\frac{\varepsilon}{6 \|f_{\rm eq}\|_\infty}.
\end{align}
In order to shorten notation, let us write $\psi_{c,\delta}$ instead of $\psi(s'_c+0)$. This is the eigenvector corresponding to the bias $v=v_{c,1}-\delta$, and the eigenvalue $\lambda(v)$ is only barely larger than $v+2$, their difference being proportional to $\delta^2$, see \eqref{mars1}. 

 From \eqref{mars5}, \eqref{mars0}  and the fact that $\psi_{c,\delta}$ converges weakly to zero as $\delta \rightarrow 0$ (see the end of Section \ref{sect22}) we see that
$\psi_{c,\delta}$ becomes more on more delocalized, far away on the left lead, when $\delta$ tends to zero. In other words, for
every fixed $M\geq 0$ we have:
\begin{align}\label{mars41}
\lim_{\delta\searrow 0}\|\psi_{c,\delta}-\Pi_-^{(M)}\psi_{c,\delta}\|=0.
\end{align}
Thus there exists $\delta_\varepsilon$ small enough such that for every
$\delta<\delta_\varepsilon$ we have:
\begin{align}\label{mars411}
\|\psi_{c,\delta}-\Pi_-^{(M_\varepsilon)}\psi_{c,\delta}\|\leq
\frac{\varepsilon}{6\|f_{\rm eq}\|_\infty}.
\end{align}
Again in order to shorten notation, let us write $\Omega_{c,\delta}$ instead of $\Omega(s'_c-0)$. 
Going back to \eqref{mars40a} and using \eqref{mars411} we obtain:
\begin{align*}
|\langle \psi_{c,\delta}\,|\,&
  \Omega_{c,\delta} f_{\rm eq}(h_L)P_{{\rm
      ac}}(h_L)\Omega_{c,\delta}^*\psi_{c,\delta}\rangle
      \\
      &-  \langle \psi_{c,\delta}\,|\,
  \Pi_-^{(M_\varepsilon)}\Omega_{c,\delta} f_{\rm eq}(h_L)P_{{\rm
      ac}}(h_L)\Omega_{c,\delta}^*\psi_{c,\delta}\rangle |\leq \frac{\varepsilon}{6}
\end{align*}
which together with \eqref{mars44'} and \eqref{mars411} gives:
\begin{align}\label{mars45}
 | \langle \psi_{c,\delta}\,|\,&
  \Omega_{c,\delta}f_{\rm eq}(h_L)P_{{\rm
      ac}}(h_L)\Omega_{c,\delta}^*\psi_{c,\delta}\rangle
      \nonumber\\
      &-  \langle \psi_{c,\delta}\,|\,
  P(v)\Omega_{c,\delta} f_{\rm eq}(h_L)P_{{\rm
      ac}}(h_L)\Omega_{c,\delta}^*\psi_{c,\delta}\rangle  |\leq \frac{\varepsilon}{2},
\end{align}
for every $|v-v_{c,1}|\leq \delta_0$ and $\delta<\delta_\varepsilon$. 

The idea is to take $v=v_{c,1}+\delta$ in the above estimate, corresponding to $v(s'_c-0)$ for which the Hamiltonian has purely absolutely continuous spectrum. Before that, let us notice a few identities. Using the
intertwining property of $\Omega_{c,\delta}$ we have that
$P(v_{c,1}+\delta)\Omega_{c,\delta}=\Omega_{c,\delta}\Pi_-$
where $\Pi_-$ is nothing but the spectral projection of the
decoupled operator $h_0+(v_{c,1}+\delta)\Pi_-$ corresponding to the absolutely continuous spectrum contained in
$\Gamma$. Defining $f_{\rm eq,\delta}(x):=f_{\rm
  eq}(x-v_{c,1}-\delta)$ we have the identity:
$$\Pi_-f_{\rm eq}(h_L)=\Pi_-f_{\rm eq,\delta}(h_L+(v_{c,1}+\delta)\Pi_-),$$
or putting everything together:
\begin{align}\label{mars46}
P(v_{c,1}+\delta)&\Omega_{c,\delta}f_{\rm eq}(h_L)P_{{\rm
      ac}}(h_L)\Omega_{c,\delta}^*\nonumber\\
      &=P(v_{c,1}+\delta)\Omega_{c,\delta}f_{\rm
    eq,\delta}(h_L+(v_{c,1}+\delta)\Pi_-)\Omega_{c,\delta}^*\nonumber \\
&=P(v_{c,1}+\delta)f_{\rm
    eq,\delta}(h(v_{c,1}+\delta)),
\end{align}
where the second identity is again implied by the intertwining
properties of $\Omega_{c,\delta}$; remember that $h(v_{c,1}+\delta)$
has purely absolutely continuous spectrum. 

Thus putting $v=v_{c,1}+\delta$ in \eqref{mars45} and using the
identity \eqref{mars46}, we obtain that for every
$\delta<\delta_\varepsilon$ we have:
\begin{align}\label{mars47}
| \langle \psi_{c,\delta}\,|\,&
  \Omega_{c,\delta}f_{\rm eq}(h_L)P_{{\rm
      ac}}(h_L)\Omega_{c,\delta}^*\psi_{c,\delta}\rangle
      \nonumber \\
      &-  \langle \psi_{c,\delta}\,|\,P(v_{c,1}+\delta)f_{\rm
    eq,\delta}(h(v_{c,1}+\delta))\psi_{c,\delta}\rangle  |\leq \frac{\varepsilon}{2}.
\end{align}
Using again \eqref{mars411} and \eqref{mars44'} we can replace 
$P(v_{c,1}+\delta)$  in \eqref{mars47} with the identity operator, with a cost of an error of at
most $\varepsilon/3$. Thus for every $\delta<\delta_\varepsilon$ we have:
\begin{align}\label{mars48}
 | \langle \psi_{c,\delta}\,|\,&
  \Omega_{c,\delta}f_{\rm eq}(h_L)P_{{\rm
      ac}}(h_L)\Omega_{c,\delta}^*\psi_{c,\delta}\rangle
      \nonumber\\
      &-  \langle \psi_{c,\delta}\,|\,f_{\rm
    eq,\delta}(h(v_{c,1}+\delta))\psi_{c,\delta}\rangle  |\leq \frac{5\varepsilon}{6}.
\end{align}
Since the family $h(v)$ is norm continuous and $f_{\rm
    eq}$ is supposed to be continuous, we may find a
$\delta'_\varepsilon<\delta_\varepsilon$ such that 
$$\|f_{\rm
    eq,\delta}(h(v_{c,1}+\delta))-f_{\rm
    eq,\delta}(h(v_{c,1}-\delta))\|\leq \varepsilon/100,\quad \delta<\delta'_\varepsilon.$$
But $\psi_{c,\delta}$ is the eigenvector of $h(v_{c,1}-\delta)$
corresponding to the eigenvalue $\lambda(v_{c,1}-\delta)$. Thus:
\begin{align*} | \langle \psi_{c,\delta}\,|\,&
  \Omega_{c,\delta}f_{\rm eq}(h_L)P_{{\rm
      ac}}(h_L)\Omega_{c,\delta}^*\psi_{c,\delta}\rangle\\
      &-
f_{\rm
    eq}(\lambda(v_{c,1}-\delta)-v_{c,1}-\delta)  |\leq
\frac{6\varepsilon}{7},\quad \forall \delta<\delta'_\varepsilon.
\end{align*}
But $\lambda(v_{c,1}-\delta)-v_{c,1}-\delta$ converges to $2$ when
$\delta$ goes to zero, thus there exists $\delta^{''}_\varepsilon<\delta'_\varepsilon$ such that for every 
$\delta<\delta^{''}_\varepsilon$ we have:
\begin{equation*}
| \langle \psi_{c,\delta}\,|\,
  \Omega_{c,\delta}f_{\rm eq}(h_L)P_{{\rm
      ac}}(h_L)\Omega_{c,\delta}^*\psi_{c,\delta}\rangle-
f_{\rm
    eq}(2)  |<
\varepsilon.
\end{equation*}
The proof is over.
\qed

\subsection*{Acknowledgments} The authors acknowledge support from the
Danish FNU grant {\it Mathematical Analysis of Many-Body Quantum Systems}. Part of this work was done at Institut Mittag-Leffler during 
the program {\it Hamiltonians in Magnetic Fields}.

\end{document}